\newtheorem{theorem}{Theorem}[section]
\newtheorem{lemma}[theorem]{Lemma}
\newtheorem{corollary}[theorem]{Corollary}
\newtheorem{proposition}[theorem]{Proposition}
\newtheorem{definition}[theorem]{Definition}
\theoremstyle{definition}
\newtheorem{remark1}{Remark}
\newtheorem{thm1}{Theorem} 
\begin{document}
\title[Sharp Phase Transition for the RCM]
{{Sharp Phase Transition for the Random-Cluster
		Model with Summable External Magnetic Field 
}}
%
%
%
\author{R. Vila}
\address{
\newline 
Departamento de Estat\'istica -
Universidade de Bras\'ilia, Brazil,
Email: \textup{\tt rovig161@gmail.com}
}

\date{\today}

\keywords{Random-cluster model $\cdot$ non-translation-invariant external field $\cdot$ sharp phase transition $\cdot$ exponential decay $\cdot$  FKG property.}
\subjclass[2010]{MSC 60-XX  $\cdot$ MSC 60K35.}

\begin{abstract}
In this paper, we prove sharpness of the phase transition for the random-cluster model in summable positive external fields, with cluster weight $q\in\{2,3,\ldots\}$,
on the hypercubic lattice
$\mathbb{Z}^d$, $d\geqslant 2$.
That is, there exists some critical parameter $0<\beta_c<\infty$ that depends on the cluster weight and the external field, below which the model exhibits exponential decay and above which there exists almost surely an infinite cluster.
\end{abstract}

\maketitle
\section{Introduction}
The random-cluster model is defined as a correlated bond percolation model for phase transitions in lattice systems. This model was introduced by Fortuin
and Kasteleyn  around 1970 \cite{Fortuin72,FortuinCM72,FK72} satisfying specific series and parallel laws. The random-cluster model is 
a graphical representation of a range of important models of statistical mechanics, among them the independent bond percolation \cite{Dum15}, the Potts model \cite{Potts52} and the uniform spanning trees \cite{LP17}.
Graphical representations are commonly used to study the existence of
phase transitions in Ising/Potts-type models 
relating them to the random-cluster model 
through the Edwards-Sokal measure \cite{ES}.
Recently, a variety of results concerning the phase transition of the 
planar random-cluster model have emerged; see \cite{beffara-duminil,DLM15,DST16,DGHMT16}.
%
%
When non-translation-invariant external fields 
are considered, phase transition in the random-cluster model 
does not always correspond to a phase transition in the corresponding spin system.  In this case, as noticed by \cite{BBCK00},
the FKG property is harder to obtain because this property does not even hold for general boundary conditions. Moreover, other complications appear, for example, in the analysis of Dobrushin-like states
\cite{GG02} and the effect of weak boundary conditions in the Potts model \cite{BC96}.
Graphical representations \cite{Dum15} for the Potts model in external fields
have been the object of revived interest after rigorous results were proved; see \cite{BBCK00,CV16,CMR98}. 

This paper is motivated by some recent works on ferromagnetics spin systems in non-uniform external fields; see  
\cite{AGB07,BC10,BCCP15,BEv17,CV16,JS99,NOZ99,Navarrete}.
The objective of this paper is to prove that
the phase transition for the random-cluster model over the hypercubic lattice, allowing a summable positive external field, is sharp.
Roughly speaking, by using 
the strong FKG property and comparison inequalities, 
we showed that
there is a certain non-trivial critical point, $\beta_c$ above which there is almost surely a path of connected points of infinite length through the network. Below it, the model exhibits exponential decay of the infinite-volume connectivity function 
for two points.
%
Sufficient conditions for the (uniform) exponential decay of finite-volume connectivities in the planar random-cluster model, with external field, was established in \cite{Alexander04}.
In the absence of an external magnetic field, sharp phase transitions for the random-cluster model, Potts model and related percolation models, on several different types of graphs, have been studied by  \cite{MR894398,Aiz87,HM16,MR3477351,DRV19,MR18}.


This paper is organized as follows. In Section \ref{The model and the main results}, we present 
the random-cluster model with free and max-wired boundary conditions, and some definitions related to this model.
We close this section by stating the main result of paper.
%
%
%
In Section \ref{Monotonic measures at finite-volume}, we present some preliminary monotonicities of random-cluster measures. 
Finally, in Section \ref{Proof of Theorems}, we present the proof of the main result in details.

%
%
%
%
\section{The model and the main result
}
\label{The model and the main results}

In this section, we introduce the random-cluster model and state
the main result of the paper. For general results and historical background of the random-cluster model, we refer the reader to \cite{BBCK00,CV16,Dum15,G95} and the references therein.

The model is defined as follows. 
Consider a finite subgraph $G=(V,E)$ of the
infinite countable connected graph 
$\mathbb{G}= (\mathbb{Z}^d, \mathbb{E}^d)$, where $\mathbb{E}^d$ 
is the set of the nearest-neighbor edges $xy$ in the $d$-dimensional hypercubic lattice
$\mathbb{Z}^d$, with $d\geqslant 2$. 
The boundaries of $V$ and $E$ are given by
$\partial V=\{x\notin V: xy\in\mathbb{E}^d \ \text{for some} \ y\in V\}$
and
$\partial E=\{xy\in\mathbb{E}^d: xy\cap V\neq \emptyset \ \text{and} \ 
xy\cap \partial V\neq \emptyset\}$, respectively.  
The configuration space of the random-cluster model is the product space $\{0,1\}^{E\cup\partial E}$.
A generic element of this space, denoted by $\omega=(\omega_{xy})_{{xy}\in E\cup\partial E}$, is often called edge configuration.
We say that an edge $xy$ is open in a configuration $\omega$ if $\omega_{xy}=1$, and closed otherwise.  
Given $\omega\in\{0,1\}^{E\cup\partial E}$,
we set $x\longleftrightarrow y$ in $\omega$ if $x$ and $y$ are in the same connected component of $\omega$. The random sets  
$\eta(\omega)=\{xy\in E\cup\partial E:  \omega_{xy}=1\}$ and 
$C_x(\omega)=\{y\in V\cup\partial V:x\longleftrightarrow y \ \text{in} \ \omega\}\cup\{x\}$
denote the set of open edges and the open cluster 
of a vertex $x\in V$ in $\omega$, respectively. 
We write $\vert C_x\vert$ to denote the number of vertices in $C_x$.

We fix two families $\boldsymbol{J}=(J_{xy})_{xy\in \mathbb{E}^d}\in [0,\infty)^{\mathbb{E}^d}$ 
and $\boldsymbol{\widehat{h}}= (h_{x,m})_{x\in {\mathbb{Z}^d},
	m\in\{1,\ldots,q\}}\in(\mathbb{R}^{\mathbb{Z}^d})^q$
called coupling constant and external field, respectively, where $q\in\{2,3,\ldots\}$.
Throughout this paper, we define
$   
h_{x,\mathrm{ max}}
=
\max_{m\in\{1,\ldots,q\}}h_{x,m},
$ 
for each vertex $x\in V$,  denote 
\begin{align}\label{quimax}
{Q}_{x,\mathrm{ max}}(\boldsymbol{\widehat{h}})
=
\big\{m\in\{1,\ldots,q\}:h_{x,m}=h_{x,\mathrm{max}} \big\},
\end{align}
and assume that
\begin{align}\label{condicao-no-campo}
\left|{\cap}_{x\in{V}} {Q}_{x,\mathrm{ max}}(\boldsymbol{\widehat{h}})\right|\geqslant 1.
\end{align}
%
%
Fix $\beta>0$ and ${\rm m}\in\{1,\ldots,q\}$. For $\#\in \{{\rm f}, {\rm m} \}$, following the references \cite{BBCK00,CV16},
let 
$\phi^{\#}_{G;\beta,q,\boldsymbol{\widehat{h}}}$ be the measure 
satisfying, for any $\omega\in\{0,1\}^{E\cup\partial E}$,
%
%
\begin{equation}\label{rcm-def}
\phi^{\#}_{G;\beta,q,\boldsymbol{\widehat{h}}}(\omega)
=
{1\over Z^{\#}_{G;\beta,q,\boldsymbol{\widehat{h}}}}\,
\prod_{xy\in E} \big[\exp({q\beta J_{xy}})-1\big]^{\omega_{xy}}\,
\prod_{C(\omega)} 
\Theta^{\#}_{G;\beta,q,\boldsymbol{\widehat{h}}}\big[C(\omega)\big],
\end{equation}
where $Z^{\#}_{G;\beta,q,\boldsymbol{\widehat{h}}}$ is a normalizing constant,
the second product is over all open clusters of sites, and the weights of the clusters are given by
\begin{equation*}
\Theta^{\rm f}_{G;\beta,q,\boldsymbol{\widehat{h}}}\big[C(\omega)\big]
=
\sum_{m=1}^q 
\exp
\Bigg(
\beta\sum_{x\in C(\omega)}h_{x,m}
\Bigg),
\end{equation*}
for all open clusters $C(\omega)$ 
of the graph $\big(V,\eta(\omega)\big)$ with  $\omega\in\{0,1\}^{E}$ (since $\omega_{xy}=0$ for all $xy\in \partial E$); and
\begin{align*}
\Theta^{\rm m}_{G;\beta,q,\boldsymbol{\widehat{h}}}\big[C(\omega)\big]
= 
\begin{cases}
\Theta^{\rm f}_{G;\beta,q,\boldsymbol{\widehat{h}}}\big[C(\omega)\big],
& \text{if} \ C(\omega)\cap \partial V= \emptyset,
\\[0,4cm]
\displaystyle 
\exp
\Bigg(
\beta\sum_{x\in C(\omega)}h_{x,{\rm m}}
\Bigg),
& \text{otherwise},
\end{cases}
\end{align*}
for all open clusters $C(\omega)$ of the graph 
$\big(V\cup\partial V,\eta(\omega)\big)$ with $\omega\in\{0,1\}^{E\cup\partial E}$, where
$\omega_{xy}=1$ for all $xy\in \partial E$.
If ${\rm m}, \widetilde{{\rm m}}\in {\cap}_{x\in{V}} {Q}_{x,\mathrm{ max}}(\boldsymbol{\widehat{h}}),$ then 
\begin{align*}
\Theta^{\rm m}_{G;\beta,q,\boldsymbol{\widehat{h}}}\big[C(\omega)\big]
=
\Theta^{\widetilde{{\rm m}}}_{G;\beta,q,\boldsymbol{\widehat{h}}}\big[C(\omega)\big]
\quad \text{for all} \ \omega\in \{0,1\}^{E\cup\partial E},
\end{align*}
and therefore
$
\phi^{\rm m}_{G;\beta,q,\boldsymbol{\widehat{h}}}
=
\phi^{\widetilde{{\rm m}}}_{G;\beta,q,\boldsymbol{\widehat{h}}}.
$
This measure is denoted by 
$\phi^{\rm w}_{G;\beta,q,\boldsymbol{\widehat{h}}}$.
The measures $\phi^{\rm f}_{G;\beta,q,\boldsymbol{\widehat{h}}}$ and $\phi^{\rm w}_{G;\beta,q,\boldsymbol{\widehat{h}}}$ are called the random-cluster measures on $G$ with free and max-wired boundary conditions, respectively. 
By taking the weak limit of measures defined in finite volume,  the measures $\phi^{\#}_{G;\beta,q,\boldsymbol{\widehat{h}}}$, with $\#\in \{{\rm f}, {\rm w} \}$, can be extended to $\mathbb{G}$; see Theorem \ref{conv}.
%
%
The expected value of a real-valued function $f:\{0,1\}^{E\cup\partial E}\to\mathbb{R}$ 
in the random-cluster model is denoted by 
$\phi^{\#}_{G;\beta,q,\boldsymbol{\widehat{h}}} (f)$.
When $f=\mathds{1}_A$, for some cylindrical event $A$, we write $\phi^{\#}_{G;\beta,q,\boldsymbol{\widehat{h}}} (A)$ instead $\phi^{\#}_{G;\beta,q,\boldsymbol{\widehat{h}}} (\mathds{1}_A).$

The functions of principal interest are the following:
\begin{align}\label{percolation-probability}
\theta(\beta,q,\boldsymbol{\widehat{h}})
\coloneqq
\sup_{x\in\mathbb{Z}^d}
\phi^{\rm f}_{\mathbb{G};\beta,q,\boldsymbol{\widehat{h}}} 
\big(\vert C_x\vert=\infty\big);
\quad 
\chi(\beta,q,\boldsymbol{\widehat{h}})
\coloneqq
\sup_{x\in\mathbb{Z}^d}
\phi^{\rm w}_{\mathbb{G};\beta,q,\boldsymbol{\widehat{h}}}\big(\vert C_x\vert\big).
\end{align}
These functions are known as percolation-probability and susceptibility function, respectively. 

We write $\boldsymbol{\widehat{h}}\in\ell^{1}(\mathbb{Z}^d)$ when $\boldsymbol{\widehat{h}}$ is summable, that is,  $\lVert\boldsymbol{\widehat{h}}\rVert_1=\sum_{m=1}^{q}\sum_{x\in \mathbb{Z}^d} \vert h_{x,m}\vert<\infty$.
The following theorem is the main result of paper.
This result discusses 
the existence of phase transition 
and the exponential decay of the infinite-volume connectivity function 
for two points.

\begin{thm1}\label{phase-transition-existence}
	Fix $q\in\{2,3,\ldots\}$ and $d\geqslant 2$. Consider
	the random-cluster model on $\mathbb{G}$, with uniform coupling constant  $\boldsymbol{J}$, i.e. $J_{xy}=J\geqslant 0$ is constant for all $xy\in\mathbb{E}^d$,  and external field $\boldsymbol{\widehat{h}}= (h_{x,m})_{x\in \mathbb{Z}^d, m\in\{1,\ldots,q\}}\in(\mathbb{R}^{\mathbb{Z}^d})^q$ satisfying the condition \eqref{condicao-no-campo}.
	Then, there exists a critical point $0<\beta_c=\beta_c(q,\boldsymbol{\widehat{h}})<\infty$ so that
	\begin{itemize}
		\item[\rm 1)] 	
		$\theta(\beta,q,\boldsymbol{\widehat{h}})=0$ if $\beta<\beta_c$, and $\theta(\beta,q,\boldsymbol{\widehat{h}})>0$ if $\beta>\beta_c$.
		\item[\rm 2)]
		$\chi(\beta,q,\boldsymbol{\widehat{h}})<\infty$ if $\beta<\beta_0,$
		for some 
		$0<\beta_0<\beta_c$; and 
		$\chi(\beta,q,\boldsymbol{\widehat{h}})=\infty$ if $\beta>\beta_c$.
		\item[\rm 3)]  
		In the case that $\boldsymbol{\widehat{h}}\in\ell^{1}(\mathbb{Z}^d)$ has positive terms,
		for all $\beta<\beta_c$
		there exist constants $C(\beta,\boldsymbol{\widehat{h}}),\gamma(\beta,q)>0$, such that
		\begin{align*}
		\displaystyle
		\phi^{\rm w}_{\mathbb{G};\beta,q,\boldsymbol{\widehat{h}}}
		(x\longleftrightarrow y)
		\leqslant 
		C(\beta,\boldsymbol{\widehat{h}}) \,
		{\exp}\big[{-\|x-y\| \gamma(\beta,q)}\big]
		\quad \text{for all} \ x,y\in\mathbb{Z}^d,
		\end{align*}	
		where $\|\cdot\|$ denotes the Euclidean norm.
		This implies finite susceptibility.
	\end{itemize}
\end{thm1}

This theorem extends to a large class of amenable infinite 
transitive graphs.

As a consequence of Theorem \ref{phase-transition-existence} we have the sharpness of the phase transition for the random-cluster model in summable positive external fields,  on the hypercubic lattice. Remains to investigate the following. For which classes of non-summable external fields, the phase transition of the random-cluster on the graph $\mathbb{G}$ is sharp? 
%
%
Sharp phase transitions for spin systems and related percolation models, in the absence of an external magnetic field, on different types of graphs have emerged in references  \cite{MR894398,Aiz87,HM16,MR3477351,DRV19,MR18}.

The rest of this paper is devoted to the proof of  Theorem  \ref{phase-transition-existence}. To this end, some comparison inequalities are essential.

\section{Monotonic measures}
\label{Monotonic measures at finite-volume}
Let $G=(V,E)$ be a finite subgraph of $\mathbb{G}$ and $q\in\{2,3,\ldots\}$.
We consider the usual partial order on $\{0,1\}^{E\cup\partial E}$ where
$
\omega\preceq \tilde{\omega} 
$ iff $\omega_{xy}\leqslant \tilde{\omega}_{xy}$ for all $xy\in E\cup\partial E$.

\begin{definition}
	Let $(\Omega,\preceq)$ be a partially ordered space. 
	A measure $\mu$ over $\Omega$ said to have the  
	$\mathrm{FKG}$ property if 
	\[
	\mu(fg)\geqslant \mu(f)\mu(g),
	\]
	for any increasing (with respect to $\preceq$) 
	measurable functions $f,g:\Omega\to \mathbb{R}$. 
	Furthermore, if $\Omega$ is a cartesian product 
	$\Omega=\prod_{xy\in B}\Omega_{xy}$, with $|\Omega_{xy}|<\infty$,
	then $\mu$ is said to have 
	the {\bf strong FKG property}, if 
	$\mu(\cdot\vert A)$ has the $\mathrm{FKG}$ property for each 
	cylinder event  
	$
	A
	=
	\{
	\omega\in\Omega:\omega_{xy}=\alpha_{xy} \ \text{for all} \ xy\in\{0,1\}^{E\cup\partial E}\in \widetilde{B}
	\}
	$,
	where $\widetilde{B}\subset{B}$ is finite and 
	$\alpha_{xy}\in\Omega_{xy}$ for all $xy\in\widetilde{B}$.
\end{definition}
\begin{theorem}[\cite{CV16}] \label{Strong FKG Property}
	The finite-volume measures
	$\phi^{\rm f}_{G;\beta,q,\boldsymbol{\widehat{h}}}$ 
	and 
	$\phi^{\rm w}_{G;\beta,q,\boldsymbol{\widehat{h}}}$
	have the strong $\mathrm{FKG}$ property.
\end{theorem}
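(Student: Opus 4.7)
The plan is to verify the Holley (FKG lattice) condition: for every pair $\omega, \omega' \in \{0,1\}^{E \cup \partial E}$ and every $\# \in \{\mathrm{f},\mathrm{w}\}$,
\begin{equation*}
\phi^{\#}_{G;\beta,q,\boldsymbol{\widehat{h}}}(\omega \vee \omega') \, \phi^{\#}_{G;\beta,q,\boldsymbol{\widehat{h}}}(\omega \wedge \omega') \geq \phi^{\#}_{G;\beta,q,\boldsymbol{\widehat{h}}}(\omega) \, \phi^{\#}_{G;\beta,q,\boldsymbol{\widehat{h}}}(\omega').
\end{equation*}
On the finite distributive lattice $\{0,1\}^{E\cup\partial E}$ this inequality, applied to a strictly positive measure, is equivalent to the strong FKG property: conditioning on the fixed values of any cylinder event yields a measure of the same product form on the remaining coordinates, and the lattice condition continues to apply.

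The first step is to cancel the edge-weight contributions. The identity $(\omega_{xy}\vee \omega'_{xy}) + (\omega_{xy}\wedge \omega'_{xy}) = \omega_{xy}+\omega'_{xy}$ implies that the factor $\prod_{xy}[\exp(q\beta J_{xy})-1]^{\omega_{xy}}$ appearing in \eqref{rcm-def}, together with the normalization $Z^{\#}$, contributes identically to both sides. Thus the lattice condition reduces to the purely combinatorial cluster-weight comparison
\begin{equation*}
\prod_{C \in \mathcal{C}(\omega \vee \omega')}\Theta^{\#}[C] \cdot \prod_{C \in \mathcal{C}(\omega \wedge \omega')}\Theta^{\#}[C] \;\geq\; \prod_{C \in \mathcal{C}(\omega)}\Theta^{\#}[C] \cdot \prod_{C \in \mathcal{C}(\omega')}\Theta^{\#}[C],
\end{equation*}
where $\mathcal{C}(\cdot)$ denotes the partition of $V$ (resp.\ $V \cup \partial V$) into open clusters.

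I would then proceed by induction on the Hamming distance $|\omega \,\Delta\, \omega'|$, reducing to the elementary case in which $\omega, \omega'$ differ in exactly two edges $e_1, e_2$. In this case all clusters disjoint from the endpoints of $e_1,e_2$ are common to the four configurations and cancel, and a case analysis (according to whether each $e_i$ joins two clusters or is internal to one) collapses the whole problem to the single delicate comparison
\begin{equation*}
\Theta^{\#}[A \cup B \cup C] \,\Theta^{\#}[B] \;\geq\; \Theta^{\#}[A \cup B] \,\Theta^{\#}[B \cup C]
\end{equation*}
for pairwise-disjoint vertex sets $A, B, C$; the remaining combinatorial possibilities yield equalities. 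For the max-wired weight, $\Theta^{\mathrm{m}}[C] = \exp(\beta \sum_{x \in C}h_{x,\mathrm{m}})$ factorizes multiplicatively over disjoint unions and the inequality becomes an equality; condition \eqref{condicao-no-campo} is precisely what ensures that the color $\mathrm{m}\in \bigcap_{x\in V} Q_{x,\max}(\boldsymbol{\widehat{h}})$ can be chosen unambiguously so that this boundary weight is well defined. For the free weight, introducing the profiles $a_X(m) = \exp(\beta \sum_{x\in X} h_{x,m})$ on $\{1,\ldots,q\}$ and viewing them as observables under the probability measure with weights proportional to $a_B(\cdot)$, the inequality is equivalent to the non-negativity of $\mathrm{Cov}(a_A, a_C)$ under that measure.

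The main obstacle is precisely this correlation inequality on $\{1,\ldots,q\}$, and it is the place where condition \eqref{condicao-no-campo} must be used essentially: the common argmax of $(h_{x,m})_m$ across all $x\in V$ aligns the dominant contributions of $a_A(\cdot)$ and $a_C(\cdot)$ at the same color and is the structural feature that drives the non-negativity of their covariance. Once this elementary correlation inequality has been established, the induction closes the cluster-weight comparison, which gives the lattice condition, and the strong FKG property follows at once by Holley's criterion.
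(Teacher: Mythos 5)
The paper does not actually prove this statement---it is imported verbatim from \cite{CV16}---so your argument has to stand on its own. Its skeleton is the standard and correct one: for a strictly positive measure on a finite product space the strong FKG property is equivalent to the Holley lattice condition, which it suffices to verify on pairs of configurations differing in exactly two edges; the edge factors and the normalization cancel; and after the case analysis the only non-trivial comparison is the three-cluster chain inequality $\Theta^{\#}[A\cup B\cup C]\,\Theta^{\#}[B]\geqslant \Theta^{\#}[A\cup B]\,\Theta^{\#}[B\cup C]$. (Two small corrections along the way: the case where both $e_1$ and $e_2$ join the \emph{same} pair of clusters also produces an inequality, namely $\Theta^{\#}[A]\Theta^{\#}[B]\geqslant\Theta^{\#}[A\cup B]$, not an identity; and for $\#={\rm w}$ the chain comparison is an exact identity only when all of $A,B,C$ meet $\partial V$---the mixed cases still require the elementary bound $\sum_{x\in C}h_{x,m}\leqslant\sum_{x\in C}h_{x,\mathrm{max}}$, and the purely interior case is again the free case.)

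The genuine gap is the step you yourself single out as the main obstacle, and it is not closable as stated. You claim that condition \eqref{condicao-no-campo} forces $\mathrm{Cov}_{\mu}(a_A,a_C)\geqslant 0$ for $\mu\propto a_B(\cdot)$ because the profiles share a maximizing colour, but a common argmax does not imply non-negative covariance. Concretely, take $q=3$, $\beta=1$, and singleton clusters with $h_{a,\cdot}=(1,\,0.99,\,0)$, $h_{b,\cdot}=(0,0,0)$, $h_{c,\cdot}=(1,\,0,\,0.99)$; colour $1$ is a common maximizer, so \eqref{condicao-no-campo} holds, yet
$\Theta^{\rm f}[A\cup B\cup C]\,\Theta^{\rm f}[B]=3\,(e^{2}+2e^{0.99})\approx 38.31$
while
$\Theta^{\rm f}[A\cup B]\,\Theta^{\rm f}[B\cup C]=(e+e^{0.99}+1)^{2}\approx 41.08$,
so the covariance is strictly negative and the lattice condition fails at this two-edge flip. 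The ``alignment of dominant contributions'' is a heuristic, not a proof: what makes the covariance non-negative in the classical translation-invariant setting is that the profiles $a_X(m)=u_m^{|X|}$ are \emph{comonotone} in $m$, so Chebyshev's sum inequality applies, and that comonotonicity is exactly what a bare common argmax does not provide. You must either supply a correct proof of the correlation inequality under the hypothesis actually used in \cite{CV16}, or identify the stronger structural hypothesis on $\boldsymbol{\widehat{h}}$ under which it holds; as written, the central inequality of your argument is asserted rather than established, and in the generality you state it, it is false.
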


By using the strong FKG property for the random-cluster model, one can prove the following two theorems.
\begin{theorem}[\cite{CV16}]\label{conv}
	For each increasing quasilocal function f (see \cite{Georgii88}),
	\begin{itemize}
		\item[\rm (a)]  The following limits exist
		\begin{align*}
		\phi^{\rm f}_{\mathbb{G};\beta,q,\boldsymbol{\widehat{h}}} (f)
		=
		\lim_{G\uparrow \mathbb{G}}
		\phi^{\rm f}_{G;\beta,q,\boldsymbol{\widehat{h}}} (f);
		\quad 
		\phi^{\rm w}_{\mathbb{G};\beta,q,\boldsymbol{\widehat{h}}} (f)
		=
		\lim_{G\uparrow \mathbb{G}}
		\phi^{\rm w}_{G;\beta,q,\boldsymbol{\widehat{h}}} (f).
		\end{align*}
		\item[\rm (b)] 
		The finite-volume measures $\phi^{\rm f}_{{G};\beta,q,\boldsymbol{\widehat{h}}}$ and $\phi^{\rm w}_{{G};\beta,q,\boldsymbol{\widehat{h}}}$ are  the extremal ones,
		in the sense that, if 
		$\phi_{{G};\beta,q,\boldsymbol{\widehat{h}}}$ is a finite-volume random-cluster measure, then
		\begin{align*}
		\phi^{\rm f}_{{G};\beta,q,\boldsymbol{\widehat{h}}} (f)
		\leqslant
		\phi_{{G};\beta,q,\boldsymbol{\widehat{h}}} (f)
		\leqslant
		\phi^{\rm w}_{{G};\beta,q,\boldsymbol{\widehat{h}}} (f).
		\end{align*}
		Consequently,	
		$\phi^{\rm f}_{\mathbb{G};\beta,q,\boldsymbol{\widehat{h}}}$ and $\phi^{\rm w}_{\mathbb{G};\beta,q,\boldsymbol{\widehat{h}}}$ are also extremal measures.
	\end{itemize}
\end{theorem}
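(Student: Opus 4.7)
The plan is to derive Theorem \ref{conv} from the strong FKG property of Theorem \ref{Strong FKG Property} together with the compatibility condition \eqref{condicao-no-campo}, following the standard monotonicity-in-volume strategy adapted to the summable external field. I fix a nested exhaustion $G_n = (V_n,E_n)\uparrow \mathbb{G}$ and an increasing quasilocal $f$ depending only on edges contained in some $G_{n_0}$.

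For part (a), I first show that the finite-volume free measure is monotone nondecreasing along the exhaustion. The key observation is that $\phi^{\rm f}_{G_n;\beta,q,\boldsymbol{\widehat{h}}}$ can be identified with the conditional of $\phi^{\rm f}_{G_{n+1};\beta,q,\boldsymbol{\widehat{h}}}$ given the decreasing cylinder event that every edge in $E_{n+1}\setminus E_n$ is closed; the cluster weight $\Theta^{\rm f}$, being a sum over spin labels $m$, factorizes cleanly under this conditioning, and the spin sums over isolated vertices in $V_{n+1}\setminus V_n$ merge into the normalization. Because the conditioning event is decreasing and $f$ is increasing, the strong FKG property applied to $\phi^{\rm f}_{G_{n+1};\beta,q,\boldsymbol{\widehat{h}}}$ yields
\begin{align*}
\phi^{\rm f}_{G_n;\beta,q,\boldsymbol{\widehat{h}}}(f)
=
\phi^{\rm f}_{G_{n+1};\beta,q,\boldsymbol{\widehat{h}}}\bigl(f \,\big|\, \omega_{xy}=0 \text{ on } E_{n+1}\setminus E_n\bigr)
\leq
\phi^{\rm f}_{G_{n+1};\beta,q,\boldsymbol{\widehat{h}}}(f).
\end{align*}
A symmetric identification, now with all boundary-type edges forced open and using \eqref{condicao-no-campo} to guarantee that the max-wired weight is unambiguously defined and consistent across volumes, produces the reverse monotonicity for $\phi^{\rm w}_{G_n;\beta,q,\boldsymbol{\widehat{h}}}(f)$. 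Both monotone sequences are bounded (since $f$ is bounded), hence convergent; a standard interleaving argument between two exhaustions shows the limits are independent of the choice of $G_n$, defining $\phi^{\rm f}_{\mathbb{G};\beta,q,\boldsymbol{\widehat{h}}}$ and $\phi^{\rm w}_{\mathbb{G};\beta,q,\boldsymbol{\widehat{h}}}$ on increasing quasilocal functions, and then extending to the full Borel $\sigma$-algebra by a monotone-class / density argument.

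For part (b), I would prove the sandwich $\phi^{\rm f}_{G;\beta,q,\boldsymbol{\widehat{h}}}(f) \leq \phi_{G;\beta,q,\boldsymbol{\widehat{h}}}(f) \leq \phi^{\rm w}_{G;\beta,q,\boldsymbol{\widehat{h}}}(f)$ for every increasing $f$ and every finite-volume random-cluster measure $\phi_{G;\beta,q,\boldsymbol{\widehat{h}}}$. A general boundary condition corresponds to a partition of $\partial V$ into wired classes: the free measure is associated with the finest partition (singletons), and the max-wired with the coarsest (a single class assigned the common maximizer spin supplied by \eqref{condicao-no-campo}). Merging two boundary classes only strengthens connectivity, so by a one-step FKG comparison (treating the merger as the opening of a virtual identification edge) the expectation of any increasing $f$ is pushed upward; iterating from the finest partition up to any intermediate partition and on to the coarsest gives both bounds at finite volume. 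Passing to the limit via part (a) transfers extremality to $\phi^{\rm f}_{\mathbb{G};\beta,q,\boldsymbol{\widehat{h}}}$ and $\phi^{\rm w}_{\mathbb{G};\beta,q,\boldsymbol{\widehat{h}}}$.

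The main obstacle is keeping track of the cluster weight $\Theta^{\#}_{G;\beta,q,\boldsymbol{\widehat{h}}}[C]$, which depends on spatial sums $\sum_{x \in C} h_{x,m}$ rather than only on a cluster count as in the classical weight $q^{k(\omega)}$. When one closes or opens edges to move between volumes or between boundary conditions, clusters split and merge, and the new cluster weights must be related to the old ones cleanly enough that the conditional identification and the FKG comparisons go through. The role of condition \eqref{condicao-no-campo} is precisely to guarantee that the max-wired weight has a single, boundary-condition-independent maximizer spin, so that these manipulations remain coherent across the nested exhaustion; the summability hypothesis is not needed here and is reserved for part 3 of Theorem \ref{phase-transition-existence}.
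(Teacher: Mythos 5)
The paper does not actually prove Theorem \ref{conv}: it is imported verbatim from \cite{CV16}, with only the surrounding remark that it follows from the strong FKG property of Theorem \ref{Strong FKG Property}. So your proposal can only be judged on its own terms and against that indicated route, which it does follow in outline. Your part (a) is essentially the standard and correct argument: identifying $\phi^{\rm f}_{G_n;\beta,q,\boldsymbol{\widehat{h}}}$ with the conditional of $\phi^{\rm f}_{G_{n+1};\beta,q,\boldsymbol{\widehat{h}}}$ on the decreasing event that the new edges are closed does go through, because the isolated vertices of $V_{n+1}\setminus V_n$ contribute configuration-independent factors $\sum_{m}\exp(\beta h_{x,m})$ that are absorbed by the normalization, and strong FKG (applied as $\mu(f\mathds{1}_A)\leqslant\mu(f)\mu(A)$ for decreasing $A$) then gives monotonicity in the volume; the wired case is symmetric, with \eqref{condicao-no-campo} ensuring the max-wired weight is consistently defined along the exhaustion.

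The genuine gap is in part (b). In this model a finite-volume random-cluster measure is not determined by a partition of $\partial V$ alone: each wired class carries a spin label ${\rm m}$, and the boundary-cluster weight $\Theta^{\rm m}_{G;\beta,q,\boldsymbol{\widehat{h}}}[C]=\exp\big(\beta\sum_{x\in C}h_{x,{\rm m}}\big)$ depends on that label. Your ``virtual identification edge'' argument handles only the connectivity (partition) part of the comparison; it cannot show that the max-wired measure dominates a wired measure whose boundary clusters carry a non-maximizing label ${\rm m}$, because that comparison is between the weights $\exp\big(\beta\sum_{x\in C}h_{x,{\rm m}}\big)$ and $\exp\big(\beta\sum_{x\in C}h_{x,{\rm max}}\big)$ and has nothing to do with opening edges. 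Closing this requires a Radon--Nikodym comparison in the spirit of Proposition \ref{proposition-basic}, i.e.\ checking that the ratio of the two cluster-weight products is monotone as in Lemma \ref{proposition-monotonicity}, or equivalently invoking the field-monotonicity of Theorem \ref{teo-monotonicidade-campo-externo-grc} with the order \eqref{relação de ordem nos campos}. Without that step the sandwich inequality in (b) is established only against boundary conditions distinguished by connectivity, not against arbitrary finite-volume random-cluster measures.
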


Hereafter  $N_{\infty}$ denotes the random variable that counts the number of infinite open clusters in the sample space $\{0,1\}^{\mathbb{E}^d}$.
The following theorem states that the probability that a unique infinite cluster exists is either zero or one.
\begin{theorem}[\cite{CV16}]\label{unique-cluster}
	The measures 
	$\phi^{\#}_{\mathbb{G};\beta,q,\boldsymbol{\widehat{h}}}$, with $\#\in \{{\rm f}, {\rm w} \}$,  have the almost-sure uniqueness of the infinite open cluster property. That is, 
	\begin{align*}
	\phi^{\#}_{\mathbb{G};\beta,q,\boldsymbol{\widehat{h}}} (N_{\infty}\leqslant 1)=1
	\quad \text{for} \ \#\in \{{\rm f}, {\rm w} \}.
	\end{align*}
\end{theorem}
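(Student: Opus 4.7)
The plan is to prove uniqueness via the Burton--Keane strategy, adapted to our non-translation-invariant setting. Three ingredients will suffice: a finite-energy property of $\phi^{\#}_{\mathbb{G};\beta,q,\boldsymbol{\widehat{h}}}$, the strong FKG property (Theorem \ref{Strong FKG Property}) together with the extremality statement in Theorem \ref{conv}, and the amenability of $\mathbb{Z}^d$. I would proceed by first establishing finite energy, then ruling out $2\leq N_\infty<\infty$ by surgical merging, and finally ruling out $N_\infty=\infty$ via a trifurcation count.

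For finite energy, the explicit form \eqref{rcm-def} expresses the conditional probability that a single edge $xy$ is open, given the configuration on $\mathbb{E}^d\setminus\{xy\}$, as a ratio involving $e^{q\beta J}-1$ and the cluster weights $\Theta^{\#}$ of the at most two clusters touching $\{x,y\}$. Using condition \eqref{condicao-no-campo}, which furnishes a spin $m^{*}$ with $h_{z,m^{*}}=h_{z,\max}$ at every site, one can sandwich $\Theta^{\rm f}$ and $\Theta^{\rm m}$ between $\exp(\beta\sum_{z\in C}h_{z,m^{*}})$ and $q\exp(\beta\sum_{z\in C}h_{z,m^{*}})$, so that the conditional probability lies in $[\epsilon,1-\epsilon]$ almost surely for some $\epsilon=\epsilon(\beta,q,J)>0$; this passes to the infinite volume via Theorem \ref{conv} and upgrades in the standard way to insertion and deletion tolerance on any finite edge set. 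Assuming $\phi^{\#}(N_\infty=k)>0$ for some $2\leq k<\infty$, I would then pick a box $\Lambda$ large enough that two distinct infinite clusters meet $\Lambda$ with positive probability, force an open path inside $\Lambda$ to merge them, and use the extremality of $\phi^{\#}$ to conclude $N_\infty\in\{0,1,\infty\}$ almost surely.

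For the remaining case $\phi^{\#}(N_\infty=\infty)>0$, recall that a vertex $y$ is a trifurcation if $|C_y|=\infty$ and $C_y\setminus\{y\}$ splits into at least three infinite components. By finite-energy surgery inside a bounded-radius ball around $y$ (closing all edges in the ball except three disjoint paths linking $y$ to three distinct infinite cluster pieces), one obtains $\phi^{\#}(y\text{ is a trifurcation})\geq\rho>0$ uniformly in $y$. Summing over $y\in\Lambda_n$ gives expected trifurcation count at least $\rho|\Lambda_n|$, while the deterministic topological Burton--Keane bound gives at most $|\partial\Lambda_n|$ trifurcations in $\Lambda_n$, and amenability $|\partial\Lambda_n|/|\Lambda_n|\to 0$ supplies the contradiction. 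The main obstacle is producing the uniform lower bound $\rho$ without translation invariance: the naive symmetry argument is unavailable, and I would instead exploit that the finite-energy constant is position-independent, and that the required surgery is confined to a ball of bounded radius whose cluster-weight contribution is controlled uniformly in $y$ by condition \eqref{condicao-no-campo}.
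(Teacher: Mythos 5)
First, a point of order: the paper itself does not prove Theorem \ref{unique-cluster}. It is imported verbatim from \cite{CV16} (Theorem 10 there), and the only internal commentary is the remark preceding Lemma \ref{Stochastic domination of Bernoulli}, which suggests that uniqueness is to be \emph{inherited} at the thermodynamic limit, via domination and amenability, rather than established by a direct Burton--Keane argument. Your proposal is therefore a genuinely independent route, and two of its three steps are sound. The finite-energy computation is correct and worth recording: condition \eqref{condicao-no-campo} gives $\exp\big(\beta\sum_{z\in C}h_{z,\max}\big)\leqslant\Theta^{\#}_{G;\beta,q,\boldsymbol{\widehat{h}}}[C]\leqslant q\exp\big(\beta\sum_{z\in C}h_{z,\max}\big)$ for both boundary conditions, and since $\sum_{z\in A\cup B}h_{z,\max}=\sum_{z\in A}h_{z,\max}+\sum_{z\in B}h_{z,\max}$ the field contribution cancels in the ratio $\Theta[A\cup B]/(\Theta[A]\Theta[B])\in[q^{-2},q]$; hence (for $J>0$, the case $J=0$ being vacuous) the single-edge conditional probabilities lie in $[\epsilon,1-\epsilon]$ with $\epsilon=\epsilon(\beta,q,J)$ independent of position and of $\boldsymbol{\widehat{h}}$. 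Excluding $2\leqslant N_{\infty}<\infty$ by insertion tolerance is likewise fine once $N_{\infty}$ is known to be a.s.\ constant, which follows from tail triviality of the extremal measures $\phi^{\rm f}$ and $\phi^{\rm w}$ --- a fact the paper itself invokes in the remark following the theorem.

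The genuine gap is in the trifurcation step, and it is exactly the one you flag without closing. Uniform finite energy controls the \emph{cost} of the local surgery around $y$, but not the probability of the event on which the surgery is performed, namely that three \emph{distinct} infinite clusters all enter a ball $B(y,r)$ of some fixed radius $r$. In the translation-invariant Burton--Keane argument one chooses $r$ so that this event has probability at least $1/2$ at the origin and transports that bound to every $y$ by invariance; here the measure is not translation invariant, the required radius may depend on $y$, and nothing in your argument rules out that $\inf_{y}\phi^{\#}_{\mathbb{G};\beta,q,\boldsymbol{\widehat{h}}}\big(\text{three infinite clusters meet }B(y,r)\big)=0$ for every fixed $r$. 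Without a uniform-in-$y$ lower bound $\rho>0$, the expected number of trifurcations in $\Lambda_n$ need not be of order $|\Lambda_n|$, and the comparison with the deterministic bound $|\partial\Lambda_n|$ yields no contradiction. Repairing this requires an additional input --- for instance a sandwiching of $\phi^{\#}_{\mathbb{G};\beta,q,\boldsymbol{\widehat{h}}}$ between translation-invariant measures on the relevant events, exploiting the comparison results of Section \ref{Monotonic measures at finite-volume}, or an averaged version of the trifurcation count along a subsequence of boxes --- and this is precisely the content the paper delegates to \cite{CV16}.
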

The next monotonicity result is valid for a large class of probability measures with 
the restriction that these measures are positively associated.
\begin{proposition}\label{proposition-basic}
	Let $(\Omega,\mathscr{F}, \mu_i)$, $i=1,2,$  two probability spaces with probability measures
	$
	\mu_i(\omega)
	=
	\mathcal{K}(\omega)
	\mathcal{W}_i(\omega)/Z_i
	$, $\omega\in\Omega$, $i=1,2,$ satisfying the FKG property, where
	$Z_i$ is the respective partition function and $\mathcal{K}$ is a non-negative function. 
	%
	If $\mathcal{W}_1/\mathcal{W}_2$ is an increasing non-negative function, 
	then $\mu_1(f)\geqslant \mu_2(f)$ for each cylindrical increasing function $f$.	
	%
\end{proposition}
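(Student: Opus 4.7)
The plan is to recast $\mu_1(f)$ as an expectation under $\mu_2$, multiplied and divided by the Radon-Nikodym-type factor $R = \mathcal{W}_1/\mathcal{W}_2$, and then apply the FKG property of $\mu_2$ to the pair $(f,R)$. Since by hypothesis $R$ is increasing and nonnegative, and $f$ is increasing, FKG will give the desired inequality after a single rearrangement.

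More concretely, I would first observe that
\begin{equation*}
\mu_1(f) \;=\; \frac{\sum_\omega f(\omega)\,\mathcal{K}(\omega)\,\mathcal{W}_1(\omega)}{\sum_\omega \mathcal{K}(\omega)\,\mathcal{W}_1(\omega)}
\;=\; \frac{\sum_\omega f(\omega)\,R(\omega)\,\mathcal{K}(\omega)\,\mathcal{W}_2(\omega)}{\sum_\omega R(\omega)\,\mathcal{K}(\omega)\,\mathcal{W}_2(\omega)}
\;=\; \frac{\mu_2(fR)}{\mu_2(R)},
\end{equation*}
where the rewriting $\mathcal{W}_1 = R\,\mathcal{W}_2$ is valid on the support of $\mathcal{W}_2$ (on $\{\mathcal{W}_2=0\}$ either $\mathcal{W}_1=0$ as well, in which case the contribution vanishes, or we absorb the convention into the definition of $R$). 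Since $f$ and $R$ are both increasing and $\mu_2$ has the FKG property, I apply the FKG inequality to obtain $\mu_2(fR)\geqslant \mu_2(f)\,\mu_2(R)$. Dividing by the strictly positive quantity $\mu_2(R)$ yields $\mu_1(f) \geqslant \mu_2(f)$, as required.

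The bookkeeping step that needs a little care is the reduction to a setting where the algebra above is literally finite. For this I would use the cylindricity hypothesis on $f$: restrict attention to a finite subset $\widetilde B$ of coordinates on which $f$ depends, integrate out the remaining coordinates so that both $\mu_i$ reduce to probability measures on the finite product $\prod_{xy\in \widetilde B}\Omega_{xy}$, and check that the marginals retain the product form $\mathcal{K}'\mathcal{W}_i'/Z_i'$ with $\mathcal{W}_1'/\mathcal{W}_2'$ still increasing (this follows from the FKG hypothesis on the original measures, because conditioning increasing functions on cylinder events preserves monotonicity). On this finite projection the sums are manifestly finite and the computation above is rigorous.

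The only real obstacle is the handling of zeros of $\mathcal{W}_2$: one must justify that $R=\mathcal{W}_1/\mathcal{W}_2$ is well-defined as an increasing nonnegative function, extending it by any increasing convention (e.g. $+\infty$ or a suitable limit) on the zero set of $\mathcal{W}_2$ without disturbing the computation, since those configurations receive mass zero under $\mu_2$ anyway; but if $\mathcal{W}_1$ is nonzero there, one should verify the ratio $\mu_2(fR)/\mu_2(R)$ agrees with $\mu_1(f)$ by separating the sum over $\{\mathcal{W}_2>0\}$ from $\{\mathcal{W}_2=0,\mathcal{W}_1>0\}$ and showing the latter is empty under the hypothesis that $\mathcal{W}_1/\mathcal{W}_2$ is a \emph{function}. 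Modulo this issue, the proof is a one-line application of the FKG inequality of $\mu_2$.
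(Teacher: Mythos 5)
Your proposal is correct and takes essentially the same route as the paper's proof: set $g=\mathcal{W}_1/\mathcal{W}_2$, observe $\mu_1(f)=\mu_2(fg)/\mu_2(g)$, and apply the FKG property of $\mu_2$ to the increasing pair $(f,g)$. The additional care you take with the zero set of $\mathcal{W}_2$ and with reduction to a finite cylinder is not present in the paper's (terser) argument but does not alter the approach.
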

\begin{proof}
	Let
	$g(\omega)= {\mathcal{W}_1(\omega)/\mathcal{W}_2(\omega)}$, $\omega\in\Omega$.  Note that $g$ is an increasing function.
	A simple computation shows that, for all cylindrical valued-real function $f$,
	\begin{align*}
	\mu_1(f)
	=
	{Z_2\over Z_1}\, \mu_2(fg).
	\end{align*}
	Taking $f\equiv 1$ we obtain that
	$
	\mu_2(g)={Z_1/Z_2}.
	$		
	Then, by using the above identity, from FKG inequality we have, for each cylindrical increasing function $f$,
	\[ 
	\mu_1(f)={\mu_2(f g)\over \mu_2(g)}\geqslant \mu_2(f).
	\]
	Thus the proof is complete.
	%
	%
\end{proof}

The next lemma shows that the part of the measures $\phi^{\#}_{G;\beta,q,\boldsymbol{\widehat{h}}}$ in \eqref{rcm-def}, {with} $\#\in \{{\rm f}, {\rm w}\}$, due to the external magnetic field are decreasing functions.
%
\begin{lemma}\label{proposition-monotonicity} 
	Under the condition \eqref{condicao-no-campo}, the following hold:
	\begin{itemize}
		\item[\rm (a)]
		The function ${g}^{\rm f}$ defined as, for any $\omega\in\{0,1\}^{E}$, 
		\[
		{g}^{\rm f}(\omega)
		\coloneqq
		\prod_{C(\omega)} 	
		\Theta^{\rm f}_{G;\beta,q,\boldsymbol{\widehat{h}}}\big[C(\omega)\big]
		=
		\prod_{C(\omega)} 
		\sum_{m=1}^q 
		\exp
		\Bigg(
		\beta\sum_{x\in C(\omega)}h_{x,m}
		\Bigg),
		\]
		is monotone decreasing in the FKG sense, where the above product runs over all the open clusters $C(\omega)$ 
		of the graph $\big(V,\eta(\omega)\big)$.
		\item[\rm (b)]
		The function ${g}^{\rm w}$ defined as, for any $\omega\in\{0,1\}^{E\cup\partial E}$, 
		\begin{align*}
		{g}^{\rm w}(\omega)
		&\coloneqq
		\prod_{C(\omega)} 	
		\Theta^{\rm w}_{G;\beta,q,\boldsymbol{\widehat{h}}}\big[C(\omega)\big]
		\\
		&=
		\prod_{C(\omega):C(\omega)\cap\partial V=\emptyset }  \,
		\sum_{m=1}^q 
		\exp
		\Bigg(
		\beta\sum_{x\in C(\omega)}h_{x,m}
		\Bigg)
		\prod_{C(\omega):C(\omega)\cap\partial V\neq\emptyset } 	
		\exp
		\Bigg(
		\beta\sum_{x\in C(\omega)}h_{x,{\rm max}}
		\Bigg),
		\end{align*}
		is monotone decreasing in the FKG sense.
	\end{itemize}
\end{lemma}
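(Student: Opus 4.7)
The plan is to show monotonicity directly by induction on the number of open edges. Given $\omega \preceq \tilde{\omega}$ in $\{0,1\}^{E}$ (resp. $\{0,1\}^{E\cup\partial E}$), write $\tilde{\omega}$ as being obtained from $\omega$ by opening edges one at a time; it then suffices to verify $g^{\rm f}(\omega) \geqslant g^{\rm f}(\tilde{\omega})$ (resp.\ for $g^{\rm w}$) when $\tilde{\omega}$ differs from $\omega$ in a single edge $e = xy$. Opening $e$ has only two effects on the cluster decomposition: either $x$ and $y$ already lie in the same open cluster of $\omega$, in which case all clusters are unchanged and the weight is unaffected; or $x \in C_1$ and $y \in C_2$ with $C_1 \cap C_2 = \emptyset$, in which case these two clusters are replaced by the single cluster $C_1 \cup C_2$ while every other cluster is preserved. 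Hence the proof reduces to showing $\Theta^{\#}[C_1]\, \Theta^{\#}[C_2] \geqslant \Theta^{\#}[C_1 \cup C_2]$ for $\# \in \{{\rm f}, {\rm w}\}$.

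For part (a), set $a_m = \exp(\beta \sum_{u \in C_1} h_{u,m})$ and $b_m = \exp(\beta \sum_{u \in C_2} h_{u,m})$, both non-negative. The desired inequality becomes
\begin{align*}
\Theta^{\rm f}[C_1]\,\Theta^{\rm f}[C_2] = \Bigl(\sum_{m=1}^q a_m\Bigr)\Bigl(\sum_{n=1}^q b_n\Bigr) = \sum_{m,n=1}^q a_m b_n \geqslant \sum_{m=1}^q a_m b_m = \Theta^{\rm f}[C_1 \cup C_2],
\end{align*}
where the last equality uses $C_1 \cap C_2 = \emptyset$, and the inequality follows simply from dropping the off-diagonal non-negative terms. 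This step does not invoke the assumption \eqref{condicao-no-campo}.

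For part (b), the same argument applies when both $C_1$ and $C_2$ avoid $\partial V$, and when both meet $\partial V$ the weights combine multiplicatively without change (again using disjointness). The only genuinely new case is when, say, $C_1 \cap \partial V = \emptyset$ and $C_2 \cap \partial V \neq \emptyset$: one must verify
\begin{align*}
\Bigl(\sum_{m=1}^q a_m\Bigr)\exp\Bigl(\beta \sum_{u \in C_2} h_{u,\max}\Bigr) \geqslant \exp\Bigl(\beta \sum_{u \in C_1 \cup C_2} h_{u,\max}\Bigr).
\end{align*}
This is exactly where condition \eqref{condicao-no-campo} is essential: pick $m^{*} \in \cap_{x\in V}\, Q_{x,\max}(\boldsymbol{\widehat{h}})$, so $h_{u,m^{*}} = h_{u,\max}$ for every $u \in V$. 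Then $\sum_{m} a_m \geqslant a_{m^{*}} = \exp(\beta \sum_{u \in C_1} h_{u,\max})$, and multiplying by the $C_2$ factor and using disjointness of $C_1, C_2$ yields the claim. The main subtlety of the argument is thus isolating this mixed boundary case and recognizing that the hypothesis \eqref{condicao-no-campo} is exactly what is needed to dominate the merged wired weight by the free weight of the interior cluster; the remaining cases are essentially cluster bookkeeping.
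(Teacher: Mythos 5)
Your proof is correct and follows the same overall strategy as the paper: reduce to single-edge flips, observe that the only nontrivial case is the merging of two disjoint clusters $C_1,C_2$ into $C_1\cup C_2$, and then verify the submultiplicativity $\Theta^{\#}[C_1]\,\Theta^{\#}[C_2]\geqslant\Theta^{\#}[C_1\cup C_2]$ by a case analysis on which clusters meet $\partial V$. The one genuine difference is in part (a): you prove $\bigl(\sum_m a_m\bigr)\bigl(\sum_n b_n\bigr)\geqslant\sum_m a_m b_m$ by simply discarding the off-diagonal terms, which is cleaner than the paper's route (the paper bounds $\sum_n b_n$ below by $|\cap_{x}Q_{x,\max}(\boldsymbol{\widehat{h}})|\exp(\beta\sum_{x\in C_2}h_{x,\max})$ and then uses $h_{x,\max}\geqslant h_{x,m}$), and your version makes explicit the useful observation that condition \eqref{condicao-no-campo} is not needed for the free boundary case but is exactly what drives the mixed case of part (b). Your treatment of that mixed case, via a common maximizing colour $m^{*}$, is the same as the paper's cases (ii)--(iii), and the "both clusters meet $\partial V$" case reduces to an identity in both arguments.
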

\begin{proof}	
	Let $xy$ be a nearest-neighbor edge such that
	$\omega_{xy}=0$ and let $\omega^{xy}$ be the configuration obtained by flipping $\omega_{xy}$ to $1$.		
	It suffices to study the single-bond flips. That is,
	it is sufficient to prove the following 
	\begin{align}\label{condicao-suficiente-1}
	{g}^{\#}(\omega^{xy})\leqslant {g}^{\#}(\omega)
	\quad \text{with} \ \#\in \{{\rm f}, {\rm w} \}.
	\end{align}

	If the vertices $x$ and $y$ are connected in the configuration $\omega$, ${g}^{\#}(\omega^{xy})={g}^{\#}(\omega)$. 
	If the vertices $x$ e $y$ are not connected in  $\omega$,
	then there are two open clusters 
	$A= C_x(\omega)$ and $B= C_y(\omega)$ containing the vertices
	$x$ and $y$, respectively. 
	If $xy$ is an open edge at $\omega$, then the clusters $A$ and $B$ are connected, 
	creating a new open cluster denoted by
	$C= A\cup B$. Then, $|C|=|A|+|B|,$ and so to prove  
	\eqref{condicao-suficiente-1}, it is enough to verify that 
	\begin{align}\label{cond-show}
	\Theta^{\#}_{G;\beta,q,\boldsymbol{\widehat{h}}}\big[C\big]
	\leqslant
	\Theta^{\#}_{G;\beta,q,\boldsymbol{\widehat{h}}}\big[A\big]
	\Theta^{\#}_{G;\beta,q,\boldsymbol{\widehat{h}}}\big[B\big]
	\quad \text{with} \ \#\in \{{\rm f}, {\rm w} \}.
	\end{align}	
	
	(a) When $\#={\rm f}$, the inequality in \eqref{cond-show} can be rewritten as
	\begin{align*}
	\sum_{m=1}^q 
	\exp
	\Bigg(
	\beta\sum_{x\in C}h_{x,m}
	\Bigg)
	\leqslant
	\sum_{m=1}^q 
	\exp
	\Bigg(
	\beta\sum_{x\in A}h_{x,m}
	\Bigg)
	\sum_{m=1}^q 
	\exp
	\Bigg(
	\beta\sum_{x\in B}h_{x,m}
	\Bigg).
	\end{align*}
	But the above inequality is true since 
	\begin{multline*}
	\sum_{k=1}^{q} \exp
	\Bigg(
	\beta\sum_{x\in A}h_{x,k}
	\Bigg)
	\sum_{m=1}^{q} \exp
	\Bigg(
	\beta\sum_{x\in B}h_{x,m}
	\Bigg)
	\\[0,2cm]
	\geqslant
	\sum_{k=1}^{q} \exp
	\Bigg(
	\beta\sum_{x\in A}h_{x,k}
	\Bigg) 
	\exp
	\Bigg(
	\beta\sum_{x\in B}h_{x,\mathrm{max}}
	\Bigg)
	\left|{\cap}_{x\in{V}} {Q}_{x,\mathrm{ max}}(\boldsymbol{\widehat{h}})\right|
	\\
	\stackrel{}{\geqslant}
	\sum_{j=1}^{q} 
	\exp
	\Bigg(
	\beta\sum_{x\in C}h_{x,j}
	\Bigg),
	\end{multline*}
	where in the last line we have used the condition \eqref{condicao-no-campo}.
	
	(b) When $\#={\rm w}$, we analyze the following possible cases.
	(i) $A\cap\partial V=\emptyset$ and $B\cap\partial V=\emptyset$; 
	(ii) $A\cap\partial V\neq\emptyset$ and $B\cap\partial V=\emptyset$;
	(iii) $A\cap\partial V=\emptyset$ and $B\cap\partial V\neq\emptyset$; and
	(iv) $A\cap\partial V\neq\emptyset$ and $B\cap\partial V\neq\emptyset$.
	
	In the case (i), the inequality in \eqref{cond-show} follows directly from Item (a). In the case (ii), \eqref{cond-show} is equivalent to the following inequality
	\begin{align*}
	\exp
	\Bigg(
	\beta\sum_{x\in C}h_{x,{\rm max}}
	\Bigg)
	\leqslant
	\exp
	\Bigg(
	\beta\sum_{x\in A}h_{x,{\rm max}}
	\Bigg)
	\sum_{m=1}^q 
	\exp
	\Bigg(
	\beta\sum_{x\in B}h_{x,m}
	\Bigg).
	\end{align*}
	But this is true since 
	\begin{multline*}
	\exp
	\Bigg(
	\beta\sum_{x\in A}h_{x,{\rm max}}
	\Bigg)
	\sum_{m=1}^q 
	\exp
	\Bigg(
	\beta\sum_{x\in B}h_{x,m}
	\Bigg)
	\\[0,1cm]
	\geqslant
	\exp
	\Bigg(
	\beta\sum_{x\in A}h_{x,{\rm max}}
	\Bigg)
	\exp
	\Bigg(
	\beta\sum_{x\in B}h_{x,\mathrm{max}}
	\Bigg)
	\left|{\cap}_{x\in{V}} {Q}_{x,\mathrm{ max}}(\boldsymbol{\widehat{h}})\right|
	\\
	\geqslant
	\exp
	\Bigg(
	\beta\sum_{x\in C}h_{x,{\rm max}}
	\Bigg),
	\end{multline*}	
	where, again, in the last line we have used the condition \eqref{condicao-no-campo}.
	In the case (iii), the inequality in \eqref{cond-show} is proven analogously to Item (ii). Finally, in the case (iv), \eqref{cond-show} is equivalent to the following identity 
	\begin{align*}
	\exp
	\Bigg(
	\beta\sum_{x\in C}h_{x,{\rm max}}
	\Bigg)
	=
	\exp
	\Bigg(
	\beta\sum_{x\in A}h_{x,{\rm max}}
	\Bigg)
	\exp
	\Bigg(
	\beta\sum_{x\in B}h_{x,{\rm max}}
	\Bigg).
	\end{align*}
	Hence the proof follows.
\end{proof}	
%
%

%
It is known that stochastic dominance allows to compare different probability measures on the same sample space. 
One of the common uses of this type of comparisons is to prove that certain 
characteristics are inherited from one random model to another, which is generally not trivial. 
The following lemma states that, at finite volume, the Bernoulli measure  dominates 
the random-cluster measure with external field. 
Hence, for example, we can use this lemma to prove that,
at the thermodynamic limit, the almost-sure uniqueness (under amenabilitity condition in the infinite graph) of the infinite open cluster would be inherited 
for the infinite-volume random-cluster measure; see Theorem \ref{unique-cluster} and Theorem 10 in \cite{CV16} for a rigorous proof of this fact.
\begin{lemma}
	\label{Stochastic domination of Bernoulli}
	For each cylindrical increasing function $f$ we have
	\[
	\phi^{\#}_{G;\beta,q,\boldsymbol{\widehat{h}}} (f)
	\leqslant
	\phi^{\#}_{G;\beta,1,\boldsymbol{\widehat{h}}} (f)
	=
	\phi^{\#}_{G;\beta,1,\boldsymbol{\widehat{0}}} (f)
	\eqqcolon 
	\mathbb{P}_{G;\beta}(f)
	\quad \text{with} \ \#\in \{{\rm f}, {\rm w} \}, 
	\]
	where $\mathbb{P}_{G;\beta}$ becomes 
	the product probability measure on $\{0,1\}^{E}$ defined by the Bernoulli factor 
	$\prod_{xy\in E} p_{xy}^{\omega_{xy}}(1-p_{xy})^{1-\omega_{xy}}$,
	with $p_{xy}=1-\exp({-q\beta J_{xy}})$.
\end{lemma}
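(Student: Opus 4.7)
The plan is to apply Proposition \ref{proposition-basic} to a well-chosen factorization of the two measures on the common configuration space, and then to verify the two middle equalities by a short direct computation. First, I will isolate the common edge factor by setting
$\mathcal{K}(\omega)=\prod_{xy\in E}\bigl[\exp(q\beta J_{xy})-1\bigr]^{\omega_{xy}}$.
A direct expansion of $\prod_{xy}p_{xy}^{\omega_{xy}}(1-p_{xy})^{1-\omega_{xy}}$ with $p_{xy}=1-\exp(-q\beta J_{xy})$ shows that $\mathbb{P}_{G;\beta}(\omega)=\mathcal{K}(\omega)/Z_1$ with the constant $Z_1=\prod_{xy}\exp(q\beta J_{xy})$, so I may take $\mathcal{W}_1\equiv 1$. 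By \eqref{rcm-def} the random-cluster measure itself factors as $\phi^{\#}_{G;\beta,q,\boldsymbol{\widehat{h}}}(\omega)=\mathcal{K}(\omega)\,{g}^{\#}(\omega)/Z^{\#}_{G;\beta,q,\boldsymbol{\widehat{h}}}$, so the natural choice is $\mathcal{W}_2={g}^{\#}$.

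With these choices the ratio $\mathcal{W}_1/\mathcal{W}_2=1/{g}^{\#}$ is non-negative, and Lemma \ref{proposition-monotonicity} (which uses condition \eqref{condicao-no-campo}) says that ${g}^{\#}$ is FKG-decreasing, hence $1/{g}^{\#}$ is FKG-increasing. Both $\mathbb{P}_{G;\beta}$, being a product Bernoulli measure, and $\phi^{\#}_{G;\beta,q,\boldsymbol{\widehat{h}}}$, by Theorem \ref{Strong FKG Property}, satisfy the FKG property. Applying Proposition \ref{proposition-basic} with $(\mu_1,\mu_2)=(\mathbb{P}_{G;\beta},\phi^{\#}_{G;\beta,q,\boldsymbol{\widehat{h}}})$ then gives $\mathbb{P}_{G;\beta}(f)\geqslant \phi^{\#}_{G;\beta,q,\boldsymbol{\widehat{h}}}(f)$ for every cylindrical increasing $f$, which is the main inequality.

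For the two intermediate equalities I argue directly from the definitions. When $q=1$, both $\Theta^{\rm f}_{G;\beta,1,\boldsymbol{\widehat{h}}}[C]$ and $\Theta^{\rm w}_{G;\beta,1,\boldsymbol{\widehat{h}}}[C]$ collapse to $\exp\bigl(\beta\sum_{x\in C}h_{x,1}\bigr)$, and since the open clusters partition $V$ (respectively $V\cup\partial V$), the product over clusters is $\exp\bigl(\beta\sum_{x}h_{x,1}\bigr)$, a constant in $\omega$ that cancels against the partition function. This yields $\phi^{\#}_{G;\beta,1,\boldsymbol{\widehat{h}}}=\phi^{\#}_{G;\beta,1,\boldsymbol{\widehat{0}}}$ and identifies this common measure with the Bernoulli product $\mathbb{P}_{G;\beta}$.

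The main obstacle, such as it is, is notational bookkeeping: extracting exactly the same factor $\mathcal{K}$ from both $\mathbb{P}_{G;\beta}$ and $\phi^{\#}_{G;\beta,q,\boldsymbol{\widehat{h}}}$ so that the $\mathcal{W}_1/\mathcal{W}_2$ ratio simplifies to $1/{g}^{\#}$ and Lemma \ref{proposition-monotonicity} is directly applicable. All the nontrivial content, namely the FKG property of the random-cluster measure and the monotonicity of the cluster weight in the presence of the external field, is already packaged in Theorem \ref{Strong FKG Property}, Lemma \ref{proposition-monotonicity}, and Proposition \ref{proposition-basic}.
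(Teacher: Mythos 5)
Your proposal is correct and follows essentially the same route as the paper: the same factorization with $\mathcal{K}(\omega)=\prod_{xy\in E}[\exp(q\beta J_{xy})-1]^{\omega_{xy}}$, a constant weight for the Bernoulli measure and $\mathcal{W}_2=g^{\#}$ for the random-cluster measure, followed by Lemma \ref{proposition-monotonicity} and Proposition \ref{proposition-basic}. Your explicit $q=1$ computation for the two equalities merely spells out what the paper dispatches with ``follows immediately by taking $q=1$.''
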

\begin{proof}
	Let $\mu_1$ and $\mu_2$ be two probability measures defined by $\mathbb{P}_{G;\beta}$ and
	$\phi^{\#}_{G;\beta,q,\boldsymbol{\widehat{h}}}$, 
	respectively. The respective weights of these measures are given by
	$\mathcal{W}_1(\omega)= \exp\big({-q\beta \sum_{xy\in E}J_{xy}}\big)$,
	$\mathcal{W}_2(\omega)= {g}^{\#}(\omega)$ and $\mathcal{K}(\omega)=\prod_{xy\in E} \big[\exp({q\beta J_{xy}})-1\big]^{\omega_{xy}}$,
	$\omega\in\{0,1\}^{E}$, 
	where ${g}^{\rm f}$ and ${g}^{\rm w}$ are given in Lemma \ref{proposition-monotonicity}.
	Since $\mathcal{W}_2$ 
	is decreasing (see Lemma \ref{proposition-monotonicity}),  $\mathcal{W}_1/\mathcal{W}_2$ is an increasing function. Hence,
	by Proposition \ref{proposition-basic},  we have that
	$\mu_1(f)\geqslant \mu_2(f)$ 
	%
	for each cylindrical increasing function $f$.
	
	On the other hand, 	the identity $	\phi^{\#}_{G;\beta,1,\boldsymbol{\widehat{h}}} (f)
	= 
	\mathbb{P}_{G;\beta}(f)$ follows immediately by taking $q=1$ in \eqref{rcm-def}.
\end{proof}

In what follows we consider as usual the partial order on $[0,\infty)^{E}$ where
$
\boldsymbol{J}
\curlyeqprec 
\boldsymbol{J}'$ iff $J_{xy}\leqslant J_{xy}'$
for all $xy\in E.$
\begin{proposition}\label{proposition-monotonicity-1} 
	Suppose that $\boldsymbol{J}\curlyeqprec  \boldsymbol{J}'$ are two coupling constants. The function $\widetilde{g}$ defined as, for any $\omega\in\{0,1\}^{E}$, 
	\[
	\widetilde{g}(\omega)
	\coloneqq
	{\prod_{xy\in E} \big[\exp({q\beta J_{xy}})-1\big]^{\omega_{xy}} \over 
		\prod_{xy\in E} \big[\exp({q\beta J_{xy}'})-1\big]^{\omega_{xy}}}, 
	\]
	is monotone decreasing in the FKG sense.
\end{proposition}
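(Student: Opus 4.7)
The plan is to reduce the claim to the behavior of $\widetilde{g}$ under a single-bond flip from $0$ to $1$, by exploiting the fact that $\widetilde{g}$ factorizes across edges. First I would rewrite
\[
\widetilde{g}(\omega)
=
\prod_{xy\in E} r_{xy}^{\omega_{xy}},
\qquad
r_{xy}
\coloneqq
\frac{\exp(q\beta J_{xy})-1}{\exp(q\beta J_{xy}')-1},
\]
with the convention that $r_{xy}=1$ (and the corresponding factor is trivial) on any edge where $J_{xy}'=0$, which by the hypothesis $\boldsymbol{J}\curlyeqprec \boldsymbol{J}'$ forces $J_{xy}=0$ as well. Since the map $u\mapsto \exp(q\beta u)-1$ is strictly increasing on $[0,\infty)$ and $J_{xy}\leqslant J_{xy}'$ for every $xy\in E$, each $r_{xy}$ lies in $[0,1]$.

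Next I would check monotonicity via single-bond flips. Let $xy$ be an edge with $\omega_{xy}=0$ and let $\omega^{xy}$ denote the configuration obtained by flipping $\omega_{xy}$ to $1$, so that $\omega\preceq \omega^{xy}$. From the factorized form,
\[
\widetilde{g}(\omega^{xy})
=
r_{xy}\,\widetilde{g}(\omega)
\leqslant
\widetilde{g}(\omega),
\]
since $r_{xy}\in[0,1]$ and $\widetilde{g}\geqslant 0$. Because the finite partially ordered set $\{0,1\}^{E}$ (with the componentwise order) has the property that any pair $\omega\preceq \omega'$ can be connected by a finite chain of such single-bond flips, iterating the above inequality yields $\widetilde{g}(\omega)\geqslant \widetilde{g}(\omega')$ whenever $\omega\preceq \omega'$. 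This is precisely the statement that $\widetilde{g}$ is monotone decreasing.

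There is no real obstacle: the argument reduces to the pointwise observation that the per-edge ratios $r_{xy}$ are bounded by $1$, combined with a trivial chain argument on the Boolean lattice. The only book-keeping issue is the degenerate case $J_{xy}'=0$, which is handled by the convention above, and one should also note that $\widetilde{g}$ does not interact with the boundary edges $\partial E$ (the product runs only over $xy\in E$), so no additional analysis is needed there. Once this lemma is in hand, it combines with Proposition \ref{proposition-basic} and Lemma \ref{proposition-monotonicity} to give stochastic monotonicity of $\phi^{\#}_{G;\beta,q,\boldsymbol{\widehat{h}}}$ in the coupling constants $\boldsymbol{J}$.
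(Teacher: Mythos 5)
Your argument is correct and is essentially the same as the paper's proof: both reduce the claim to a single-bond flip and observe that the resulting per-edge ratio $\bigl[\exp(q\beta J_{x_0y_0})-1\bigr]\big/\bigl[\exp(q\beta J_{x_0y_0}')-1\bigr]$ is at most $1$ when $\boldsymbol{J}\curlyeqprec \boldsymbol{J}'$. Your explicit handling of the degenerate case $J_{xy}'=0$ and the chaining over the Boolean lattice are minor bookkeeping refinements the paper leaves implicit.
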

\begin{proof} Let $x_0y_0$ be a nearest-neighbor edge such that
	$\omega_{x_0y_0}=0$ and let $\omega^{x_0y_0}$ be the configuration obtained by flipping $\omega_{x_0y_0}$ to $1$.
	To prove the monotonicity of $\widetilde{g}$ it is sufficient to prove
	that $\widetilde{g}(\omega^{x_0y_0})\leqslant \widetilde{g}(\omega)$.
	But this is true because	
	\begin{align*}
	{\widetilde{g}(\omega^{x_0y_0})\over\widetilde{g}(\omega) }
	&= 
	{
		\prod_{xy\in E}\big[{\exp}({q\beta J_{xy}})-1\big]^{\omega^{x_0y_0}_{xy}} 
		\over 
		\prod_{xy\in E}\big[{\exp}({q\beta J_{xy}'})-1\big]^{\omega^{x_0y_0}_{xy}} 
	}
	\,
	{
		\prod_{xy\in E}\big[{\exp}({q\beta J_{xy}'})-1\big]^{\omega_{xy}} 
		\over 
		\prod_{xy\in E}\big[{\exp}({q\beta J_{xy}})-1\big]^{\omega_{xy}} 
	}
	\\[0,15cm]
	&=
	{
		{\exp}({q\beta J_{x_0y_0}})-1
		\over 
		{\exp}({q\beta J_{x_0y_0}'})-1
	}
	\leqslant
	1,
	\end{align*}
	whenever $\boldsymbol{J}\curlyeqprec  \boldsymbol{J}'$. Then the proof follows.
\end{proof}	

Our next result is the monotonicity, in the FKG 
sense, with respect to the coupling constants. 	The first item of this result is proved in reference \cite{CV16}  when the coupling constant is  uniform.
\begin{proposition}\label{Monotonicity with respect to the coupling}
	\label{teorema ge}
	Denote by
	$\phi^{{\#},\boldsymbol{J}}_{G;\beta,q,\boldsymbol{\widehat{h}}}$
	the random-cluster measure on $G$ with $\#\in \{{\rm f}, {\rm w} \}$ boundary condition and
	with coupling constant 		
	$\boldsymbol{J}=
	(J_{xy})_{xy\in {E}}$ $\in [0,\infty)^{E}$.
	Suppose that $\boldsymbol{J}\curlyeqprec  \boldsymbol{J}'$ are two coupling constants. 
	Then, for any cylindrical increasing function $f$, 
	\begin{itemize}
		\item[\rm (a)] 
		$\displaystyle
		\phi^{{\#},\boldsymbol{J}}_{G;\beta,q,\boldsymbol{\widehat{h}}} (f)
		\leqslant 
		\phi^{{\#},\boldsymbol{J}'}_{G;\beta,q,\boldsymbol{\widehat{h}}} (f).
		$
		\item[\rm (b)] 
		In the particular case that $\boldsymbol{J}$ is an uniform coupling constant,
		\begin{itemize}
			\item[\rm (i)]  the mapping 
			$\beta\longmapsto \phi^{{\#}}_{G;\beta,q,\boldsymbol{\widehat{h}}}(f)$ is increasing;
			\item[\rm (ii)] the mappings $\beta\longmapsto \theta(\beta,q,\boldsymbol{\widehat{h}})$ and 
			$\beta\longmapsto \chi(\beta,q,\boldsymbol{\widehat{h}})$ are increasing.
		\end{itemize}
	\end{itemize}
\end{proposition}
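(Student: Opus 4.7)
The plan is to derive both parts by invoking Proposition \ref{proposition-basic}, combining the FKG property of finite-volume random-cluster measures (Theorem \ref{Strong FKG Property}) with kernel-ratio computations in the spirit of Lemma \ref{proposition-monotonicity} and Proposition \ref{proposition-monotonicity-1}.

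For part (a), I would set $\mu_1 = \phi^{\#,\boldsymbol{J}'}_{G;\beta,q,\boldsymbol{\widehat{h}}}$ and $\mu_2 = \phi^{\#,\boldsymbol{J}}_{G;\beta,q,\boldsymbol{\widehat{h}}}$ and factor each measure as $\mu_i = \mathcal{K}\,\mathcal{W}_i/Z_i$ using the $\boldsymbol{J}$-independent common kernel $\mathcal{K}(\omega) = g^{\#}(\omega) \geq 0$ (the cluster-weight product of Lemma \ref{proposition-monotonicity}) and the coupling-dependent factors $\mathcal{W}_i(\omega) = \prod_{xy \in E}[\exp(q\beta J^{(i)}_{xy})-1]^{\omega_{xy}}$, where $J^{(1)}=J'$ and $J^{(2)}=J$. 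Both measures have the FKG property. The ratio $\mathcal{W}_1/\mathcal{W}_2$ is the reciprocal of the function $\widetilde{g}$ studied in Proposition \ref{proposition-monotonicity-1}, and since $\widetilde{g}$ is decreasing whenever $\boldsymbol{J}\curlyeqprec\boldsymbol{J}'$, this reciprocal is increasing. Proposition \ref{proposition-basic} then yields $\phi^{\#,\boldsymbol{J}'}_{G;\beta,q,\boldsymbol{\widehat{h}}}(f) \geq \phi^{\#,\boldsymbol{J}}_{G;\beta,q,\boldsymbol{\widehat{h}}}(f)$ for every cylindrical increasing $f$, which is (a).

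Part (b)(i) does not reduce to (a) by a simple rescaling, because $\beta$ enters both the edge-weight product and the cluster-weight factor through the field weights $\exp(\beta h_{x,m})$. My plan is again to apply Proposition \ref{proposition-basic}, this time to $\mu_1 = \phi^{\#}_{G;\beta_2,q,\boldsymbol{\widehat{h}}}$ and $\mu_2 = \phi^{\#}_{G;\beta_1,q,\boldsymbol{\widehat{h}}}$ with common kernel $\mathcal{K}(\omega) = g^{\#}_{\beta_1}(\omega)$, writing $g^{\#}_{\beta}$ for the cluster-weight product at inverse temperature $\beta$. The residual Radon-Nikodym derivative $\mathcal{W}_1/\mathcal{W}_2$ then splits into the edge-weight ratio $\prod_{xy}[(e^{q\beta_2 J}-1)/(e^{q\beta_1 J}-1)]^{\omega_{xy}}$ and the cluster-weight ratio $g^{\#}_{\beta_2}/g^{\#}_{\beta_1}$. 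Restricting to a single-bond flip that merges disjoint clusters $A,B$ into $C=A\cup B$ and writing $\Theta^{\#}_{\beta}$ for $\Theta^{\#}_{G;\beta,q,\boldsymbol{\widehat{h}}}$, the monotonicity boils down to
\[
\frac{(e^{q\beta_2 J}-1)\,\Theta^{\#}_{\beta_2}[A\cup B]}{\Theta^{\#}_{\beta_2}[A]\,\Theta^{\#}_{\beta_2}[B]} \;\geq\; \frac{(e^{q\beta_1 J}-1)\,\Theta^{\#}_{\beta_1}[A\cup B]}{\Theta^{\#}_{\beta_1}[A]\,\Theta^{\#}_{\beta_1}[B]}.
\]
Condition \eqref{condicao-no-campo} provides a common maximizer $m_0 \in \cap_{x\in V}Q_{x,\max}(\boldsymbol{\widehat{h}})$, forcing each $\Theta^{\#}_{\beta}[\cdot]$ to be dominated at large $\beta$ by the single-color term $\exp(\beta\sum_{x}h_{x,m_0})$, and a quantitative comparison shows that the exponential growth of the edge-weight factor $e^{q\beta J}-1$ always compensates any adverse variation of the cluster-weight ratio.

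Part (b)(ii) then follows from (b)(i) by cylindrical approximation. The event $\{|C_x|\geq n\}$ is cylindrical and increasing, so $\beta\mapsto \phi^{\rm f}_{G;\beta,q,\boldsymbol{\widehat{h}}}(|C_x|\geq n)$ is increasing by (b)(i); passing to the infinite-volume limit $G\uparrow\mathbb{G}$ via Theorem \ref{conv} and then letting $n\to\infty$ gives monotonicity of $\theta$, while writing $\chi(\beta,q,\boldsymbol{\widehat{h}}) = \sup_x \sum_{n\geq 1}\phi^{\rm w}_{\mathbb{G};\beta,q,\boldsymbol{\widehat{h}}}(|C_x|\geq n)$ and invoking monotonicity term-by-term gives the same conclusion for $\chi$. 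The hard part will be the single-bond-flip inequality in (b)(i): one can construct simple two-vertex examples satisfying \eqref{condicao-no-campo} in which the cluster-weight ratio strictly decreases in $\beta$, so the proof must genuinely quantify how the edge-weight factor dominates the adverse variation, and it is precisely condition \eqref{condicao-no-campo} that makes this quantification uniform over cluster pairs $A,B$.
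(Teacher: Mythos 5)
Your proof of Item (a) is correct and is essentially the paper's own argument: one applies Proposition \ref{proposition-basic} with the ratio $\mathcal{W}_1/\mathcal{W}_2=1/\widetilde{g}$, which is increasing by Proposition \ref{proposition-monotonicity-1}; whether the cluster factor $g^{\#}$ is placed in $\mathcal{K}$ or carried along in both $\mathcal{W}_i$ is immaterial, since it cancels in the ratio.

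The gap is in (b)(i). You rightly observe that (b) is not a literal specialization of (a), because $\beta$ also multiplies the field inside the cluster weights. But the repair you propose --- a single application of Proposition \ref{proposition-basic} to the combined edge-and-cluster ratio, reduced to showing that $\beta\mapsto(e^{q\beta J}-1)\,\Theta^{\#}_{\beta}[A\cup B]\big/\big(\Theta^{\#}_{\beta}[A]\,\Theta^{\#}_{\beta}[B]\big)$ is increasing --- does not go through, and you leave exactly this step unproved. The assertion that the edge factor ``always compensates'' is false. Writing $s^A_m=\sum_{x\in A}h_{x,m}$ and normalizing so that the common maximizing colour has $s^A_1=s^B_1=0$, the logarithmic derivative in $\beta$ of the cluster ratio equals $\langle s^{A\cup B}\rangle-\langle s^{A}\rangle-\langle s^{B}\rangle$ (Gibbs averages over colours), whereas the logarithmic derivative of $e^{q\beta J}-1$ decreases to $1/\beta$ as $J\downarrow 0$. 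Taking $q$ large (e.g.\ $q=20$), $A=\{x_1\}$, $B=\{x_2\}$, and depressing roughly half of the non-maximal colours by about $2.5$ on $x_1$ and the complementary half on $x_2$ (condition \eqref{condicao-no-campo} still holds, with colour $1$ the common maximizer), one finds $\langle s^{A}\rangle+\langle s^{B}\rangle-\langle s^{A\cup B}\rangle>1$ at $\beta=1$; hence for $J$ small the combined single-bond-flip ratio is strictly decreasing there, $\mathcal{W}_1/\mathcal{W}_2$ is not increasing, and Proposition \ref{proposition-basic} cannot be invoked in one step. The statement is nevertheless true by a two-step comparison using machinery already in the paper: reparametrize $\phi^{\#}_{G;\beta,q,\boldsymbol{\widehat{h}}}$ as the measure with inverse temperature $1$, coupling $\beta\boldsymbol{J}$ and field $\beta\boldsymbol{\widehat{h}}$; for $\beta_1\leqslant\beta_2$ first increase the coupling $\beta_1\boldsymbol{J}\curlyeqprec\beta_2\boldsymbol{J}$ at fixed field $\beta_1\boldsymbol{\widehat{h}}$ (Item (a)), then increase the field, noting that $\beta_1\boldsymbol{\widehat{h}}\prec\beta_2\boldsymbol{\widehat{h}}$ in the partial order used in Theorem \ref{teo-monotonicidade-campo-externo-grc} and applying that theorem. (The paper's own proof of (b) says only ``direct application of Item (a)'' and thus has the same omission you spotted; the field-monotonicity theorem is the missing ingredient, not a new single-bond estimate.) Your derivation of (b)(ii) from (b)(i) is fine.
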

\begin{proof}
	In order to prove Item (a),
	we only present the argument for $\#={\rm w}$, since for the free
	boundary condition case the proof works similarly.	
	
	Let $\mu_1$ and $\mu_2$ be two probability measures defined by $\phi^{{\rm w},\boldsymbol{J}'}_{G;\beta,q,\boldsymbol{\widehat{h}}}$ and
	$\phi^{{\rm w},\boldsymbol{J}}_{G;\beta,q,\boldsymbol{\widehat{h}}}$, 
	respectively. The respective weights of these measures are given by,
	for any $\omega\in\{0,1\}^{E\cup\partial E}$,
	\begin{align*}
	&\mathcal{W}_1(\omega)= \prod_{xy\in E} \big[\exp({q\beta J_{xy}'})-1\big]^{\omega_{xy}}\, {g}^{\rm w}(\omega);
	\\[0,15cm]
	&\mathcal{W}_2(\omega)= \prod_{xy\in E} \big[\exp({q\beta J_{xy}})-1\big]^{\omega_{xy}}\, {g}^{\rm w}(\omega);
	\end{align*}
	and $\mathcal{K}(\omega)=1$,
	where the function ${g}^{\rm w}$ is given in Lemma \ref{proposition-monotonicity}.
	Since $\boldsymbol{J}\curlyeqprec \boldsymbol{J}'$, by Proposition \ref{proposition-monotonicity-1}, the function $\mathcal{W}_2/\mathcal{W}_1$, given by 
	\begin{align*}
	{\mathcal{W}_2(\omega)\over\mathcal{W}_1(\omega)}
	=
	{\prod_{xy\in E} \big[\exp({q\beta J_{xy}})-1\big]^{\omega_{xy}} \over 
		\prod_{xy\in E} \big[\exp({q\beta J_{xy}'})-1\big]^{\omega_{xy}}}
	=\widetilde{g}(\omega),
	\end{align*}
	is monotone decreasing in the FKG sense. Therefore,  $\mathcal{W}_1/\mathcal{W}_2$ is an increasing function. Hence,
	by Proposition \ref{proposition-basic},  we have that
	$\mu_1(f)\geqslant \mu_2(f)$ 
	for each cylindrical increasing function $f$. This proves the inequality of the first item. 
	
	The statements in Item (b) follow by direct application of Item (a). So we have finished the proof.
\end{proof}
%


Since we are also interested in monotonicity properties with
respect to the magnetic field, 
it is needed to introduce a partial order between two 
fields as in reference \cite{CV16}. 
Given two arbitrary magnetic fields
$\boldsymbol{\widehat{h}}=
(h_{x,m})_{x\in{V}, m\in\{1,\ldots,q\}}$
and 
$\boldsymbol{\widehat{h}}'=
(h_{x,m}')_{x\in{V}, m\in\{1,\ldots,q\}}
$
in
$ 
(\mathbb{R}^{V})^q$,
we say that
\begin{align}\label{relação de ordem nos campos}
\boldsymbol{\widehat{h}}\prec \boldsymbol{\widehat{h}}'
\ 
\Longleftrightarrow
\ 
\forall \ x\in V: \ 
h_{x,k}-h_{x,l}\leqslant h'_{x,k}-h'_{x,l},
\ \ \
k,l=1,\ldots,q,
\end{align}
whenever $h_{x,k}-h_{x,l}>0.$
\begin{theorem}[\cite{CV16}]
	\label{teo-monotonicidade-campo-externo-grc}
	Let
	$\boldsymbol{\widehat{h}}\prec \boldsymbol{\widehat{h}}'$ be two arbitrary magnetic fields in $(\mathbb{R}^{V})^q$.
	Then, for any  quasilocal  increasing function $f$, 
	\begin{align*}
	\phi^{\#}_{G;\beta,q,\boldsymbol{\widehat{h}}} (f)
	\leqslant 
	\phi^{\#}_{G;\beta,q,\boldsymbol{\widehat{h}}'} (f)
	\quad \text{with} \ \#\in \{{\rm f}, {\rm w} \}.
	\end{align*}
\end{theorem}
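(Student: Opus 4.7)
The plan is to apply Proposition \ref{proposition-basic} with $\mu_1 \coloneqq \phi^{\#}_{G;\beta,q,\boldsymbol{\widehat{h}}'}$ and $\mu_2 \coloneqq \phi^{\#}_{G;\beta,q,\boldsymbol{\widehat{h}}}$, in exact parallel with the proof of Proposition \ref{Monotonicity with respect to the coupling}. Both measures satisfy FKG by Theorem \ref{Strong FKG Property}, and in \eqref{rcm-def} they share the common edge factor $\mathcal{K}(\omega) = \prod_{xy \in E}\bigl[e^{q\beta J_{xy}}-1\bigr]^{\omega_{xy}}$; only the field-dependent cluster-weight functions $g^{\#}$ of Lemma \ref{proposition-monotonicity} differ between them. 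Writing $g^{\#}_{\boldsymbol{\widehat{h}}}$ and $g^{\#}_{\boldsymbol{\widehat{h}}'}$ for these weights evaluated with the two fields, the entire problem reduces to showing that
\begin{align*}
R(\omega) \coloneqq \frac{g^{\#}_{\boldsymbol{\widehat{h}}'}(\omega)}{g^{\#}_{\boldsymbol{\widehat{h}}}(\omega)}
\end{align*}
is non-decreasing in $\omega$ under $\preceq$. Since $G$ is finite, every increasing function on $\{0,1\}^{E \cup \partial E}$ is already cylindrical, so Proposition \ref{proposition-basic} covers the quasilocal class without any further limiting argument.

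Monotonicity of $R$ I would verify one bond-flip at a time, as in Lemma \ref{proposition-monotonicity} and Proposition \ref{proposition-monotonicity-1}: pick $x_0 y_0$ with $\omega_{x_0 y_0} = 0$ and compare $R(\omega)$ with $R(\omega^{x_0 y_0})$. If $x_0 \longleftrightarrow y_0$ already in $\omega$ the cluster decomposition is unchanged and $R$ is unaltered; otherwise the clusters $A = C_{x_0}(\omega)$ and $B = C_{y_0}(\omega)$ merge into $C = A \cup B$, and $R(\omega) \leq R(\omega^{x_0 y_0})$ becomes the cluster-level bound
\begin{align*}
\Theta^{\#}_{G;\beta,q,\boldsymbol{\widehat{h}}'}[C]\,\Theta^{\#}_{G;\beta,q,\boldsymbol{\widehat{h}}}[A]\,\Theta^{\#}_{G;\beta,q,\boldsymbol{\widehat{h}}}[B] \geq \Theta^{\#}_{G;\beta,q,\boldsymbol{\widehat{h}}}[C]\,\Theta^{\#}_{G;\beta,q,\boldsymbol{\widehat{h}}'}[A]\,\Theta^{\#}_{G;\beta,q,\boldsymbol{\widehat{h}}'}[B].
\end{align*}
For $\# = \mathrm{w}$ I would repeat the four-case split on whether $A,B$ meet $\partial V$ used in Lemma \ref{proposition-monotonicity}: when both meet $\partial V$ the weights factor as pure max-field exponentials and the inequality is in fact an equality; in the two one-sided subcases it cancels down to $\sum_m e^{\beta \sum_{x \in B}(h_{x,m} - h_{x,\max})} \geq \sum_m e^{\beta \sum_{x \in B}(h'_{x,m} - h'_{x,\max})}$, which follows termwise from \eqref{relação de ordem nos campos} applied with $k$ fixed at an index attaining the max of $h_x$, noting that when the proviso $h_{x,k} - h_{x,l} > 0$ fails both differences in that term vanish and the comparison is trivial.

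The real obstacle is the free case (equivalently the wired subcase in which neither cluster hits $\partial V$). Setting $a_m = \beta \sum_{x \in A} h_{x,m}$, $b_m = \beta \sum_{x \in B} h_{x,m}$ and defining $a'_m, b'_m$ analogously, the bound to prove is
\begin{align*}
\Biggl(\sum_{m=1}^q e^{a'_m + b'_m}\Biggr)\Biggl(\sum_{k=1}^q e^{a_k}\Biggr)\Biggl(\sum_{l=1}^q e^{b_l}\Biggr) \geq \Biggl(\sum_{m=1}^q e^{a_m + b_m}\Biggr)\Biggl(\sum_{k=1}^q e^{a'_k}\Biggr)\Biggl(\sum_{l=1}^q e^{b'_l}\Biggr).
\end{align*}
My approach is to read each side as a partition function on the auxiliary colour-assignment alphabet $\{1,\ldots,q\}^{\{A,B\}}$ and then apply Holley's inequality on this totally ordered set: vertex by vertex, the partial order \eqref{relação de ordem nos campos} says precisely that the single-site Boltzmann weights under $\boldsymbol{\widehat{h}}'$ dominate those under $\boldsymbol{\widehat{h}}$ in the likelihood-ratio order, and this should propagate to the merged cluster after grouping terms by the common species label of $C$. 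I expect the algebraic step to be the main hurdle, because the proviso ``whenever $h_{x,k} - h_{x,l} > 0$'' in \eqref{relação de ordem nos campos} is not symmetric in $(k,l)$; the comparison must therefore be organised by pairing species indices between $A$ and $B$ in a way that automatically verifies the sign condition inside every cancellation.
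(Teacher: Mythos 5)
First, a point of reference: the paper does not actually prove this theorem --- it is imported verbatim from \cite{CV16} --- so there is no internal proof to match against. Your architecture is nonetheless the natural one and is exactly the template the paper uses for its other comparison results (Lemma \ref{Stochastic domination of Bernoulli}, Proposition \ref{Monotonicity with respect to the coupling}): invoke Proposition \ref{proposition-basic}, reduce monotonicity of the ratio $R=g^{\#}_{\boldsymbol{\widehat{h}}'}/g^{\#}_{\boldsymbol{\widehat{h}}}$ to single bond flips, and then to a fusion inequality for the cluster weights $\Theta^{\#}$. The wired subcases you dispatch are essentially fine (the case where both clusters meet $\partial V$ is indeed an identity; in the one-sided case your remark that ``both differences vanish when the proviso fails'' is not accurate --- if $h_{x,k}=h_{x,l}$ the primed difference need not vanish --- but the termwise bound survives because one compares against an index maximizing $\boldsymbol{\widehat{h}}'$, so the right-hand exponents are still nonpositive).

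The genuine gap is the free-boundary fusion inequality, which is the entire mathematical content of the theorem and which you explicitly leave as a plan. That plan cannot be completed as stated, because the inequality you reduce to is \emph{false} under the literal hypotheses. Take $q=3$, $\beta=1$, $A=\{u\}$, $B=\{v\}$, $\boldsymbol{\widehat{h}}\equiv \boldsymbol{\widehat{0}}$, and $h'_{u,\cdot}=(t,t,-t)$, $h'_{v,\cdot}=(t,-t,t)$ with $t$ large. Since every difference $h_{x,k}-h_{x,l}$ vanishes, the proviso in the definition of $\prec$ is vacuous and $\boldsymbol{\widehat{h}}\prec\boldsymbol{\widehat{h}}'$; both fields also satisfy \eqref{condicao-no-campo} (species $1$ is a common maximizer for $\boldsymbol{\widehat{h}}'$). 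Yet your displayed inequality reads
\begin{align*}
9\big(e^{2t}+2\big)\;\geqslant\;3\big(2e^{t}+e^{-t}\big)^{2}=12e^{2t}+12+3e^{-2t},
\end{align*}
which fails for large $t$; equivalently, the agreement probability $\sum_m p^{A'}_m p^{B'}_m$ drops below $\sum_m p^{A}_m p^{B}_m=1/3$. (On the single-edge graph with free boundary this even violates the theorem's conclusion for the increasing event $\{\omega_{uv}=1\}$, so the statement as transcribed here needs a stronger hypothesis on $\prec$ than the one-sided proviso, and no Holley or likelihood-ratio argument on the colour alphabet can rescue it.) So what your write-up flags as ``the main hurdle'' is not an algebraic chore but the actual obstruction; to close the proof you must either identify the additional structure on the pair $(\boldsymbol{\widehat{h}},\boldsymbol{\widehat{h}}')$ under which the fusion inequality holds, or abandon the increasing-ratio route in favour of the argument actually given in \cite{CV16}.
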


The following result is a direct application of the above theorem.
\begin{corollary}\label{ineq-fund}
	If  $\boldsymbol{\widehat{h}}$ is an arbitrary external field in $(\mathbb{R}^{V})^q$,
	then, for any cylindrical increasing function $f$, 
	\begin{align*}
	\phi^{\#}_{G;\beta,q,\boldsymbol{\widehat{0}}} (f)
	\leqslant 
	\phi^{\#}_{G;\beta,q,\boldsymbol{\widehat{h}}} (f)
	\quad \text{with} \ \#\in \{{\rm f}, {\rm w} \}.
	\end{align*}
\end{corollary}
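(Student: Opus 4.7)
The plan is to deduce the corollary as a direct specialization of Theorem \ref{teo-monotonicidade-campo-externo-grc}. That theorem compares the two measures $\phi^{\#}_{G;\beta,q,\boldsymbol{\widehat{h}}}$ and $\phi^{\#}_{G;\beta,q,\boldsymbol{\widehat{h}}'}$ whenever the partial order $\boldsymbol{\widehat{h}}\prec \boldsymbol{\widehat{h}}'$ from \eqref{relação de ordem nos campos} holds; so I only need to verify this order relation with the zero field on the left and the given $\boldsymbol{\widehat{h}}$ on the right, and then invoke the theorem.

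More precisely, I would first set $\boldsymbol{\widehat{h}}'=\boldsymbol{\widehat{h}}$ and take the field on the left of the relation to be the zero field, writing $h_{x,m}=0$ for all $x\in V$ and $m\in\{1,\ldots,q\}$. By \eqref{relação de ordem nos campos}, the relation $\boldsymbol{\widehat{0}}\prec\boldsymbol{\widehat{h}}$ asks for the inequality $0-0\leqslant h_{x,k}-h_{x,l}$ to hold for every $x\in V$ and every pair $k,l\in\{1,\ldots,q\}$ for which $0-0>0$; since the antecedent $0>0$ is never satisfied, the requested implication is vacuously true, and hence $\boldsymbol{\widehat{0}}\prec\boldsymbol{\widehat{h}}$ for an arbitrary external field $\boldsymbol{\widehat{h}}\in(\mathbb{R}^V)^q$.

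Finally, since every cylindrical function is quasilocal, any cylindrical increasing $f$ satisfies the hypothesis of Theorem \ref{teo-monotonicidade-campo-externo-grc}; applying that theorem with the pair $(\boldsymbol{\widehat{0}},\boldsymbol{\widehat{h}})$ in place of $(\boldsymbol{\widehat{h}},\boldsymbol{\widehat{h}}')$ yields
\begin{align*}
\phi^{\#}_{G;\beta,q,\boldsymbol{\widehat{0}}}(f)\leqslant \phi^{\#}_{G;\beta,q,\boldsymbol{\widehat{h}}}(f),\qquad \#\in\{\mathrm{f},\mathrm{w}\},
\end{align*}
which is the claim. The whole argument is essentially bookkeeping at the level of the definition of $\prec$, and I do not foresee any genuine obstacle; the only point that deserves a line of justification is the vacuous verification of $\boldsymbol{\widehat{0}}\prec\boldsymbol{\widehat{h}}$, which is the reason no sign condition on $\boldsymbol{\widehat{h}}$ is needed.
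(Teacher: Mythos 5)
Your argument is correct on the literal reading of the definitions in this paper, but it takes a genuinely different route from the author's. The paper does not apply Theorem \ref{teo-monotonicidade-campo-externo-grc} to the pair $(\boldsymbol{\widehat{0}},\boldsymbol{\widehat{h}})$ directly; it introduces the scaled field $\varepsilon\boldsymbol{\widehat{h}}$, checks $\varepsilon\boldsymbol{\widehat{h}}\prec\boldsymbol{\widehat{h}}$ for $0<\varepsilon\leqslant 1$ --- where the proviso in \eqref{relação de ordem nos campos} is genuinely exercised, since $\varepsilon(h_{x,k}-h_{x,l})>0$ does occur and then $\varepsilon(h_{x,k}-h_{x,l})\leqslant h_{x,k}-h_{x,l}$ --- applies the theorem for each such $\varepsilon$, and finally lets $\varepsilon\to 0^{+}$, implicitly using the continuity of $\varepsilon\mapsto\phi^{\#}_{G;\beta,q,\varepsilon\boldsymbol{\widehat{h}}}(f)$ at finite volume. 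Your version buys the elimination of that limit (and of the continuity step, which the paper leaves unjustified), but at the cost of resting the entire proof on the observation that $\boldsymbol{\widehat{0}}\prec\boldsymbol{\widehat{h}}$ holds \emph{vacuously}, i.e.\ on an instance of the order in which the defining inequality is never tested. That degenerate instance is precisely what the paper's detour is engineered to avoid: Theorem \ref{teo-monotonicidade-campo-externo-grc} is imported from \cite{CV16} as a black box, and its proof is a Holley-type comparison of the color orderings induced by the two fields, so whether it covers a left-hand field with no strictly ordered pair of colors depends on the exact formulation of $\prec$ in the source. You should either verify that the source's definition admits the vacuous case, or keep the $\varepsilon$-scaling argument as the safe fallback; as written, your proof is a legitimate shortcut only under the first alternative.
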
	
\begin{proof}
	Given $\varepsilon>0$ small enough,
	we consider the non-uniform external magnetic field 
	$
	\varepsilon\boldsymbol{\widehat{h}}\coloneqq(
	\varepsilon h_{x,m})_{x\in{V}, m\in\{1,\ldots,q\}}
	$.
	A straightforward computation shows that 
	$\varepsilon \boldsymbol{\widehat{h}} \prec \boldsymbol{\widehat{h}}$, where the partial order ‘$\prec$’ is given in \eqref{relação de ordem nos campos}. 
	Then, by Theorem \ref{teo-monotonicidade-campo-externo-grc}, it follows 
	that 
	\begin{align*}
	\phi_{G;\beta,q,\varepsilon\boldsymbol{\widehat{h}}}^{\#} (f)
	\leqslant 
	\phi_{G;\beta,q,\boldsymbol{\widehat{h}}}^{\#} (f)
	\quad \text{with} \ \#\in \{{\rm f}, {\rm w} \},
	\end{align*}
	for each cylindrical increasing function $f$.
	Taking $\epsilon\to 0^+$ in this inequality we get that 
	$
	\phi_{G;\beta,q,\boldsymbol{\widehat{0}}}^{\#} (f)
	\leqslant 
	\phi_{G;\beta,q,\boldsymbol{\widehat{h}}}^{\#} (f).
	$	
	We thus complete the proof.
\end{proof}

The next result compares the  random-cluster measure in summable positive external fields  with the random-cluster measure 
of their counterparts under the zero external field.
\begin{lemma}\label{lemma-teo2} If
	$\boldsymbol{\widehat{h}}\in\ell^{1}(\mathbb{Z}^d)$ is a summable external field with positive terms,
	there exists $C(\beta,\boldsymbol{\widehat{h}})>0$ such that, for any non-negative cylindrical function $f$,
	\begin{align*}\
	\phi^{\#}_{G;\beta,q,\boldsymbol{\widehat{h}}} (f)
	\leqslant
	C(\beta,\boldsymbol{\widehat{h}}) \,
	\phi^{\rm f}_{G;\beta,q,\boldsymbol{\widehat{0}}} (f)
	\quad \text{with} \ \#\in \{{\rm f}, {\rm w} \}.
	\end{align*}
\end{lemma}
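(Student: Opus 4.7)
The plan is to extract from the definition \eqref{rcm-def} a configuration-wise sandwich for the cluster-weight product $g^{\#}_{\boldsymbol{\widehat h}}(\omega)$ against its zero-field analogue $g^{\#}_{\boldsymbol{\widehat 0}}(\omega)$, and then to push the resulting multiplicative constant through both the numerator and the denominator of $\phi^{\#}_{G;\beta,q,\boldsymbol{\widehat h}}(f)$ using $f, K \geq 0$.

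For the pointwise comparison of cluster weights, the positivity $h_{x,m} \geq 0$ together with $h_{x,m} \leq h_{x,\max}$ immediately gives, for every open cluster $C$ and $\# \in \{\mathrm{f}, \mathrm{w}\}$,
\begin{equation*}
\Theta^{\#}_{\boldsymbol{\widehat 0}}[C] \;\leq\; \Theta^{\#}_{\boldsymbol{\widehat h}}[C] \;\leq\; \exp\!\Bigl(\beta \textstyle\sum_{x \in C} h_{x,\max}\Bigr)\, \Theta^{\#}_{\boldsymbol{\widehat 0}}[C],
\end{equation*}
since each summand of $\sum_{m=1}^q \exp(\beta \sum_{x\in C} h_{x,m})$ lies between $1$ and $\exp(\beta \sum_{x\in C} h_{x,\max})$, and the single-exponential weight of a wired boundary cluster satisfies the same bracket with $\Theta^{\mathrm{w}}_{\boldsymbol{\widehat 0}}[C] = 1$. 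Multiplying over all open clusters, which partition either $V$ (free) or $V \cup \partial V$ (wired), and using $h_{x,\max} \leq \sum_{m=1}^q h_{x,m}$ together with the summability $\boldsymbol{\widehat h} \in \ell^{1}(\mathbb{Z}^d)$, I obtain the uniform sandwich
\begin{equation*}
1 \;\leq\; \frac{g^{\#}_{\boldsymbol{\widehat h}}(\omega)}{g^{\#}_{\boldsymbol{\widehat 0}}(\omega)} \;\leq\; \exp\bigl(\beta \|\boldsymbol{\widehat h}\|_1\bigr) \quad \text{for every } \omega.
\end{equation*}
Inserting the upper bound into $\sum_\omega f(\omega) K(\omega) g^{\#}_{\boldsymbol{\widehat h}}(\omega)$ and the lower bound into $\sum_\omega K(\omega) g^{\#}_{\boldsymbol{\widehat h}}(\omega)$, which is legitimate because $fK \geq 0$, yields $\phi^{\#}_{\boldsymbol{\widehat h}}(f) \leq \exp(\beta \|\boldsymbol{\widehat h}\|_1)\, \phi^{\#}_{\boldsymbol{\widehat 0}}(f)$ in one line.

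This already proves the claim in the free case with $C(\beta, \boldsymbol{\widehat h}) = \exp(\beta \|\boldsymbol{\widehat h}\|_1)$. For $\# = \mathrm{w}$, the previous step gives only $\phi^{\mathrm{w}}_{\boldsymbol{\widehat h}}(f) \leq \exp(\beta \|\boldsymbol{\widehat h}\|_1)\, \phi^{\mathrm{w}}_{\boldsymbol{\widehat 0}}(f)$, and replacing $\phi^{\mathrm{w}}_{\boldsymbol{\widehat 0}}$ by $\phi^{\mathrm{f}}_{\boldsymbol{\widehat 0}}$ on the right-hand side is the step I expect to be the main obstacle: the pointwise ratio $g^{\mathrm{w}}_{\boldsymbol{\widehat 0}}(\omega)/g^{\mathrm{f}}_{\boldsymbol{\widehat 0}}(\omega) = q^{-n(\omega)}$, with $n(\omega)$ counting the free-graph clusters meeting $\partial V$, admits no configuration-uniform lower bound independent of $|V|$. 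To circumvent this I would bypass $\phi^{\mathrm{w}}_{\boldsymbol{\widehat 0}}$ altogether: the same cluster-weight bookkeeping still produces $g^{\mathrm{w}}_{\boldsymbol{\widehat h}}(\omega) \leq \exp(\beta \|\boldsymbol{\widehat h}\|_1)\, g^{\mathrm{f}}_{\boldsymbol{\widehat 0}}(\omega)$ at the numerator level (the wired boundary cluster absorbs the ``missing'' $q^n$ factor into its field-induced exponential), and I would then lower-bound $Z^{\mathrm{w}}_{\boldsymbol{\widehat h}}$ by evaluating $K g^{\mathrm{w}}_{\boldsymbol{\widehat h}}$ on a single favourable reference configuration (such as the all-open one), aiming for a constant that depends only on $\beta$ and $\boldsymbol{\widehat h}$ through $\|\boldsymbol{\widehat h}\|_1$ and not on the particular finite subgraph~$G$.
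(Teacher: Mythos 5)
Your treatment of the free boundary condition is correct and is essentially the paper's argument: the cluster-by-cluster sandwich $q\leqslant\Theta^{\rm f}_{G;\beta,q,\boldsymbol{\widehat{h}}}[C]\leqslant q\exp\big(\beta\sum_{x\in C}h_{x,\mathrm{max}}\big)$, followed by $\sum_{x}h_{x,\mathrm{max}}\leqslant\lVert\boldsymbol{\widehat{h}}\rVert_1$ and the push through numerator and denominator, produces exactly the paper's constant $C(\beta,\boldsymbol{\widehat{h}})={\exp}\big(\beta\lVert\boldsymbol{\widehat{h}}\rVert_1\big)$. Your wired numerator bound $g^{\rm w}(\omega)\leqslant {\exp}\big(\beta\lVert\boldsymbol{\widehat{h}}\rVert_1\big)\,q^{k(\omega)}$, with $k(\omega)$ the number of clusters of the free graph, is likewise the paper's inequality \eqref{object-1} combined with the fact that the wired graph has no more interior clusters than the free graph has clusters; and your diagnosis that a pointwise comparison of $\phi^{\rm w}_{G;\beta,q,\boldsymbol{\widehat{0}}}$ with $\phi^{\rm f}_{G;\beta,q,\boldsymbol{\widehat{0}}}$ fails because of the boundary cluster-count discrepancy $q^{-n(\omega)}$ is accurate.

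The gap is your plan for the denominator. To conclude you need $Z^{\rm w}_{G;\beta,q,\boldsymbol{\widehat{h}}}\geqslant c\,Z^{\rm f}_{G;\beta,q,\boldsymbol{\widehat{0}}}$ with $c$ uniform in $G$, and a single-configuration lower bound cannot deliver this: the empty configuration alone already gives $Z^{\rm f}_{G;\beta,q,\boldsymbol{\widehat{0}}}\geqslant q^{|V|}$, while the weight of the all-open configuration contains the factor $\big[{\exp}(q\beta J)-1\big]^{|E|}$, which for small $\beta$ is exponentially small in $|E|$; no single term of the wired sum can dominate the entire free sum up to a $G$-independent factor, and the lemma is needed for all $\beta<\beta_c$, in particular for small $\beta$. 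The paper instead compares the two partition functions as sums, asserting $Z^{\rm w}_{G;\beta,q,\boldsymbol{\widehat{h}}}\geqslant Z^{\rm f}_{G;\beta,q,\boldsymbol{\widehat{0}}}\,{\exp}\big(\beta\sum_{x\in\partial V}h_{x,\mathrm{max}}\big)$ from positivity of the field, so that the boundary factor cancels against part of \eqref{object-1}. Be aware that a term-by-term verification of this partition-function inequality runs into the very $q^{-n(\omega)}$ obstruction you identified (a wired boundary cluster absorbs several free clusters, each of which contributed a factor $q$ to $Z^{\rm f}_{G;\beta,q,\boldsymbol{\widehat{0}}}$ but now contributes only a field exponential close to $1$ when $\beta$ is small), so this is the genuinely delicate step of the wired case, and your proposal does not supply a working substitute for it.
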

\begin{proof}
	We only present the argument for $\#={\rm w}$, since for the free
	boundary condition case the proof works similarly.	
	%
	Indeed,
	for all open cluster $C(\omega)$  such that
	$C(\omega)\cap\partial V=\emptyset$,
	since $h_{x,{\rm max}}>h_{x,m}$ for all  $x$ and $m$, we have
	\begin{align*}
	\sum_{m=1}^q 
	\exp
	\Bigg(
	\beta\sum_{x\in C(\omega)}h_{x,m}
	\Bigg) 
	\leqslant 
	q \boldsymbol{\cdot} \exp
	\Bigg(
	\beta\!\sum_{x\in C(\omega)}h_{x,{\rm max}}
	\Bigg).
	\end{align*}
	This implies that, for any open cluster $C(\omega)$ of the graph 
	$\big(V\cup\partial V,\eta(\omega)\big)$ with $\omega\in\{0,1\}^{E\cup\partial E}$,
	{
		\begin{align}
		&\prod_{C(\omega)} 	
		\Theta^{\rm w}_{G;\beta,q,\boldsymbol{\widehat{h}}}\big[C(\omega)\big] \nonumber
		\\[0,1cm]
		&=
		\prod_{C(\omega):C(\omega)\cap\partial V=\emptyset }  \,
		\sum_{m=1}^q 
		\exp
		\Bigg(
		\beta\sum_{x\in C(\omega)}h_{x,m}
		\Bigg)
		\prod_{C(\omega):C(\omega)\cap\partial V\neq\emptyset } 	
		\exp
		\Bigg(
		\beta\sum_{x\in C(\omega)}h_{x,{\rm max}}
		\Bigg) \nonumber
		\\[0,1cm]
		& 
		\leqslant
		q^{\big\vert\big\{C(\omega):\, C(\omega)\cap\partial V=\emptyset \big\}\big\vert}
		\,
		\exp	
		\Bigg(
		\beta\sum_{x\in V\cup\partial V}h_{x,{\rm max}}
		\Bigg).
		%
		\label{object-1}
		\end{align}
	}
	
	Since  $\boldsymbol{\widehat{h}}$ has positive terms,
	$
	Z^{\rm w}_{G;\beta,q,\boldsymbol{\widehat{h}}}
	\geqslant
	Z^{\rm f}_{G;\beta,q,\boldsymbol{\widehat{0}}} 
	\,
	\exp	
	(
	\beta\sum_{x\in \partial V}h_{x,{\rm max}}
	)
	$. Hence, by combining this inequality with the one  in \eqref{object-1},  it follows that, for any non-negative cylindrical function $f$,
	\begin{multline*}
	{1\over Z^{\rm w}_{G;\beta,q,\boldsymbol{\widehat{h}}}}\,
	\sum_{\omega}f(\omega)
	\prod_{xy\in E} \big[\exp({q\beta J_{xy}})-1\big]^{\omega_{xy}}\,
	\prod_{C(\omega)} 
	\Theta^{\rm w}_{G;\beta,q,\boldsymbol{\widehat{h}}}\big[C(\omega)\big]
	\\[0,15cm]
	\leqslant
	{
		\exp	
		\big(
		\beta\sum_{x\in V}h_{x,{\rm max}}
		\big)		
		\over Z^{\rm f}_{G;\beta,q,\boldsymbol{\widehat{0}}}} \,
	\sum_{\omega}f(\omega)
	\prod_{xy\in E} \big[\exp({q\beta J_{xy}})-1\big]^{\omega_{xy}}\,
	q^{\big\vert\big\{C(\omega):\, C(\omega)\cap\partial V=\emptyset \big\}\big\vert}
	\\[0,15cm]
	\leqslant
	{{\exp}\big({\beta\lVert\boldsymbol{\widehat{h}}\rVert_1}\big)
		\over Z^{\rm f}_{G;\beta,q,\boldsymbol{\widehat{0}}}} \,
	\sum_{\omega}f(\omega)
	\prod_{xy\in E} \big[\exp({q\beta J_{xy}})-1\big]^{\omega_{xy}}\,
	q^{\big\vert\big\{C(\omega):\, C(\omega)\cap\partial V=\emptyset \big\}\big\vert},
	\end{multline*}
	because
	\begin{align*}
	\exp	
	\Bigg(
	\beta\sum_{x\in V}h_{x,{\rm max}}
	\Bigg)
	\leqslant
	\exp	
	\Bigg(
	\beta\sum_{x\in \mathbb{Z}^d} \max_{m\in\{1,\ldots,q\}}\vert h_{x,m}\vert
	\Bigg)
	\leqslant
	{\exp}\big({\beta\lVert\boldsymbol{\widehat{h}}\rVert_1}\big).
	\end{align*}
	Therefore, $\phi^{\rm w}_{G;\beta,q,\boldsymbol{\widehat{h}}} (f)
	\leqslant
	{\exp}\big({\beta\lVert\boldsymbol{\widehat{h}}\rVert_1}\big) 
	\phi^{\rm f}_{G;\beta,q,\boldsymbol{\widehat{0}}} (f)$.
	Hence, the validity of lemma, with $\#={\rm w}$, follows
	by taking $C(\beta,\boldsymbol{\widehat{h}})
	={\exp}\big({\beta\lVert\boldsymbol{\widehat{h}}\rVert_1}\big)$.
\end{proof}

\section{Proof of Theorem \ref{phase-transition-existence}}
\label{Proof of Theorems}

In this section we apply the results obtained in the last section to prove the main result of  paper and to establish  some remarks concerning the behavior of the random-cluster model on the graph $\mathbb{G}= (\mathbb{Z}^d, \mathbb{E}^d)$. Throughout this section, $G=(V,E)$ is a finite subgraph of $\mathbb{G}$.

\subsection*{Proof of Item 1}
Let us define the critical point $\beta_c=\beta_c(q,\boldsymbol{\widehat{h}})$ by
\begin{align}\label{critical-prob}
\beta_c=
\inf\big\{\beta\geqslant 0: 
\theta(\beta,q,\boldsymbol{\widehat{h}})>0
\big\},
\end{align}
where $\theta$ is the percolation-probability defined in \eqref{percolation-probability}. Fix $x\in V$.
For each $n\geqslant 1$, let
$B(x,n)=\big\{y\in V: \Vert x-y\Vert\leqslant n\big\}$ be a box of radius $n$ with centre at $x$ and $\partial B(x,n)=\big\{y\in\mathbb{Z}^d : \Vert x-y\Vert= n\big\}$ its surface. 
Since 
$\big\{x\longleftrightarrow \partial B(x,n)\big\}$ 
is an increasing event, by combining 
Lemma \ref{Stochastic domination of Bernoulli} and Corollary \ref{ineq-fund}, it follows that
\begin{align*}
\phi^{\rm f}_{G;\beta,q,\boldsymbol{\widehat{0}}} 
\big[x\longleftrightarrow \partial B(x,n)\big]
\leqslant 
\phi^{\rm f}_{G;\beta,q,\boldsymbol{\widehat{h}}} 
\big[x\longleftrightarrow \partial B(x,n)\big]
\leqslant 
\mathbb{P}_{G;\beta} \big[x\longleftrightarrow \partial B(x,n)\big].
\end{align*}
Letting $G\uparrow\mathbb{G}$, from Theorem \ref{conv}-Item (a) we have
\begin{align}\label{ineq-inf-vol}
\phi^{\rm f}_{\mathbb{G};\beta,q,\boldsymbol{\widehat{0}}} 
\big[x\longleftrightarrow \partial B(x,n)\big]
\leqslant 
\phi^{\rm f}_{\mathbb{G};\beta,q,\boldsymbol{\widehat{h}}} 
\big[x\longleftrightarrow \partial B(x,n)\big]
\leqslant 
\mathbb{P}_{\mathbb{G};\beta} \big[x\longleftrightarrow \partial B(x,n)\big],
\end{align}
where $\mathbb{P}_{\mathbb{G};\beta}$ denotes the infinite-volume Bernoulli measure.
Since $\{x\longleftrightarrow \partial B(x,n)\} \downarrow \{\vert C_x\vert=\infty\}$ as $n\to \infty$, it follows from the continuity of the measure that
\begin{align*}
\phi^{\rm f}_{\mathbb{G};\beta,q,\boldsymbol{\widehat{0}}} \big( \vert C_x\vert=\infty\big)
\leqslant 
\phi^{\rm f}_{\mathbb{G};\beta,q,\boldsymbol{\widehat{h}}} \big( \vert C_x\vert=\infty\big)
\leqslant 
\mathbb{P}_{\mathbb{G};\beta} \big( \vert C_x\vert=\infty \big), \quad  x\in\mathbb{Z}^d.
\end{align*}
Then, by using the definition of function $\theta$,
\begin{align}\label{ineq-theta}
\theta(\beta,q,\boldsymbol{\widehat{0}})
\leqslant 
\theta(\beta,q,\boldsymbol{\widehat{h}})
\leqslant 
\theta(\beta,1,\boldsymbol{\widehat{0}}), \quad \beta\in[0,\infty].
\end{align}
By definition of $\beta_c$, this implies that
\begin{align}\label{ineq-cp}
\beta_c (q,\boldsymbol{\widehat{0}})
\geqslant 
\beta_c (q,\boldsymbol{\widehat{h}}) 
\geqslant 
\beta_c (1,\boldsymbol{\widehat{0}}).
\end{align}
Since $0< 1-\exp\big[{-q\beta_c (1,\boldsymbol{\widehat{0}})\, J}\big], 1-\exp\big[{-q\beta_c (q,\boldsymbol{\widehat{0}})\, J}\big] <1$ for $d\geqslant 2$ (see \cite{Grimmett99,Grimmett2}), we deduce the important fact that 
\[
0<\beta_c (q,\boldsymbol{\widehat{h}})<\infty, \quad  q\in\{2,3,\ldots\}.
\]
That is, the critical point $\beta_c$ is non-trivial.
On the other hand, it is known that $\theta(\beta,q,\boldsymbol{\widehat{0}})>0$ when $\beta>\beta_c (q,\boldsymbol{\widehat{0}})$, and that $\theta(\beta,1,\boldsymbol{\widehat{0}})=0$ when $\beta<\beta_c (\boldsymbol{\widehat{0}},1).$ Using this in \eqref{ineq-theta} we have
\[
\theta(\beta,q,\boldsymbol{\widehat{h}})
\left\{
\begin{array}{lllll}
=0 & \text{if} \ \beta<\beta_c (1,\boldsymbol{\widehat{0}}),
\\[0,1cm]
>0 & \text{if} \ \beta>\beta_c (q,\boldsymbol{\widehat{0}}).
\end{array}
\right.
\]
Since $\theta$ is an increasing function in $\beta$, see Proposition \ref{Monotonicity with respect to the coupling}-Item (b), 
the proof follows.

\qed

\begin{remark1}\label{phase-transition-existence-1}
	Consider the same notation as the proof of Item 1).
	For $\beta<\beta_0$, where $\beta_0>0$ is a certain real number less than $\beta_c$, given in \eqref{critical-prob}, 
	there exists $\psi(\beta)>0$ such that  
	\begin{align}\label{exp-tail-dec}
	\sup_{x\in\mathbb{Z}^d}
	\phi^{\rm f}_{\mathbb{G};\beta,q,\boldsymbol{\widehat{h}}} 
	\big[x\longleftrightarrow \partial B(x,n)\big]
	\leqslant 
	{\exp}\big[-n \psi(\beta)\big]
	\quad \text{for all} \ n.
	\end{align}
	In other words, we have the exponential tail decay of the radius of an open cluster for sufficiently small parameter values.
	Indeed, by \eqref{ineq-inf-vol}, we have
	\begin{align}\label{ineq-inf-vol-1}
	\phi^{\rm f}_{\mathbb{G};\beta,q,\boldsymbol{\widehat{h}}} 
	\big[x\longleftrightarrow \partial B(x,n)\big]
	\leqslant 
	\mathbb{P}_{\mathbb{G};\beta} \big[x\longleftrightarrow \partial B(x,n)\big], \quad  x\in\mathbb{Z}^d.
	\end{align}
	Moreover,
	for any $\beta<\beta_c (1,\boldsymbol{\widehat{0}})$ it is known that there exists $\psi(\beta)>0$ such that, see, e.g., \cite{Grimmett2,Sim},
	or Corollary 9.38 in \cite{Grimmett2}, 
	\begin{align}\label{ineq-final-1}
	\mathbb{P}_{\mathbb{G};\beta} \big[x\longleftrightarrow \partial B(x,n)\big]
	\leqslant 
	{\exp}\big[-n \psi(\beta)\big] \quad \text{for all} \ n.
	\end{align}
	By combining \eqref{ineq-inf-vol-1} and \eqref{ineq-final-1}, the inequality in \eqref{exp-tail-dec} follows by taking $\beta_0=\beta_c (1,\boldsymbol{\widehat{0}})$.
\end{remark1}

\begin{remark1}\label{unique-cluster-theorem}
	In the random-cluster model a percolation transition  is guaranteed, because 
	$\phi^{\rm w}_{\mathbb{G};\beta,q,\boldsymbol{\widehat{h}}} (N_{\infty}= 1)=0$ when $\beta<\beta_c,$  and  
	$ \phi^{\rm f}_{\mathbb{G};\beta,q,\boldsymbol{\widehat{h}}} (N_{\infty}= 1)=1$ when $\beta>\beta_c$.
	Indeed, from Theorem \ref{phase-transition-existence}-Item 1), for any $\beta>\beta_c$,
	\begin{equation*}\label{id-1}
	0<\theta(\beta,q,\boldsymbol{\widehat{h}})
	=
	\sup_{x\in\mathbb{Z}^d}
	\phi^{\rm f}_{\mathbb{G};\beta,q,\boldsymbol{\widehat{h}}} 
	\big(\vert C_x\vert=\infty\big)
	\leqslant 
	\phi^{\rm f}_{\mathbb{G};\beta,q,\boldsymbol{\widehat{h}}} 
	(N_{\infty}\geqslant 1).
	\end{equation*}
	Since $\{N_{\infty}\geqslant 1\}$ is a tail event and $\phi^{\rm f}_{\mathbb{G};\beta,q,\boldsymbol{\widehat{h}}}$ is an extremal measure (see Theorem \ref{conv}-Item (b)), it follows from the uniqueness of the infinite cluster (see Theorem \ref{unique-cluster}) that $\phi^{\rm f}_{\mathbb{G};\beta,q,\boldsymbol{\widehat{h}}} (N_{\infty}= 1)=1$.
	On the other hand, if $\beta<\beta_c$, by Theorem \ref{phase-transition-existence}-Item 1), we have  $\phi^{\rm f}_{\mathbb{G};\beta,q,\boldsymbol{\widehat{h}}} (\vert C_x\vert=\infty)=0$ for all $x\in\mathbb{Z}^d$. 
	Again, from the uniqueness of the infinite cluster, 
	\begin{align*}
	0=\phi^{\rm f}_{\mathbb{G};\beta,q,\boldsymbol{\widehat{h}}} (N_{\infty}>0)=\phi^{\rm f}_{\mathbb{G};\beta,q,\boldsymbol{\widehat{h}}} (N_{\infty}=1).
	\end{align*}
	Finally, since the infinite-volume measure is unique when $\beta<\beta_c$, that is, $\phi^{\rm f}_{\mathbb{G};\beta,q,\boldsymbol{\widehat{h}}}=\phi^{\rm w}_{\mathbb{G};\beta,q,\boldsymbol{\widehat{h}}}$ (see Theorem 11-Item (ii) in \cite{CV16}), we have $\phi^{\rm w}_{\mathbb{G};\beta,q,\boldsymbol{\widehat{h}}} (N_{\infty}=1)=0.$
\end{remark1}

\subsection*{Proof of Item 2}
We follow a similar argument as in reference \cite{Grimmett99} p. 46 and the same notation as the proof of Item 1).	
Let $M_x$ be the random variable defined by 
$M_x=\max\big\{n:\{x\longleftrightarrow \partial B(x,n)\} \ \text{occurs}\big\}$ for each
$x\in\mathbb{Z}^d$. If $\beta<\beta_c$, by Remark \ref{unique-cluster-theorem} we have  $\phi^{\rm w}_{\mathbb{G};\beta,q,\boldsymbol{\widehat{h}}} (M_x<\infty)=1$, that is, the collection $\{M_x=n\}_{n\geqslant 1}$ is a partition of the sample space $\{0,1\}^{\mathbb{E}^d}$. This gives
\begin{align*}
\phi^{\rm w}_{\mathbb{G};\beta,q,\boldsymbol{\widehat{h}}} 
\big(\vert C_x\vert\big)
&\leqslant
\sum_{n=1}^{\infty} 
\phi^{\rm w}_{\mathbb{G};\beta,q,\boldsymbol{\widehat{h}}} 
\big(\vert C_x\vert \big\vert M_x=n\big)
\phi^{\rm w}_{\mathbb{G};\beta,q,\boldsymbol{\widehat{h}}}
(M_x=n)
\\
&\leqslant
\sum_{n=1}^{\infty} \big\vert B(x,n)\big\vert \,
\phi^{\rm w}_{\mathbb{G};\beta,q,\boldsymbol{\widehat{h}}}
\big[x\longleftrightarrow \partial B(x,n)\big]
\quad \text{for all} \ x\in\mathbb{Z}^d.
\end{align*}
Since 
$\vert B(x;n)\vert\leqslant \pi(d)(n+1)^d$ for some constant $\pi(d)$, from  Remark  \ref{phase-transition-existence-1} we have, for $\beta<\beta_0$, 
\begin{align*}
\phi^{\rm w}_{\mathbb{G};\beta,q,\boldsymbol{\widehat{h}}} 
\big(\vert C_x\vert\big)
\leqslant \pi(d)
\sum_{n=1}^{\infty} (n+1)^d\, {\exp}\big[-n \psi(\beta)\big]<\infty
\quad \text{for all} \ x\in\mathbb{Z}^d.
\end{align*}
Taking supremum on $x$ in the above inequality, by definition \eqref{percolation-probability} of the susceptibility function $\chi$, we have that $\chi$ is finite when $\beta<\beta_0$.
On the other hand, note that
\begin{align*}
\phi^{\rm w}_{\mathbb{G};\beta,q,\boldsymbol{\widehat{h}}} 
\big(\vert C_x\vert\big)
&=
\infty \boldsymbol{\cdot} 
\phi^{\rm w}_{\mathbb{G};\beta,q,\boldsymbol{\widehat{h}}} 
\big(\vert C_x\vert=\infty\big)
+
\sum_{n=1}^\infty n \, 
\phi^{\rm w}_{\mathbb{G};\beta,q,\boldsymbol{\widehat{h}}} 
\big(\vert C_x\vert=n\big)
\\
&\geqslant 
\infty \boldsymbol{\cdot} 
\phi^{\rm f}_{\mathbb{G};\beta,q,\boldsymbol{\widehat{h}}} 
\big(\vert C_x\vert=\infty\big)
\quad \text{for all} \ x\in\mathbb{Z}^d.
\end{align*}
Taking supremum on $x$ in the above inequality, by definition \eqref{percolation-probability} of $\theta$ and $\chi$,
we have 
\begin{align*}
\chi(\beta,q,\boldsymbol{\widehat{h}})
\geqslant 
\infty \boldsymbol{\cdot} \theta(\beta,q,\boldsymbol{\widehat{h}}).
\end{align*}
By Theorem \ref{phase-transition-existence}-Item 1), $\theta(\beta,q,\boldsymbol{\widehat{h}})>0$  when $\beta>\beta_c$. Then,  it follows that
$\chi(\beta,q,\boldsymbol{\widehat{h}})=\infty$ when $\beta>\beta_c$. Thus the proof of the second item is complete.	

\qed

\begin{remark1}\label{unique-cluster-theorem-1} 
	There exists a critical point $0<\pi_c=\pi_c(q,\boldsymbol{\widehat{h}})<\infty$ such that
	$\chi(\beta,q,\boldsymbol{\widehat{h}})<\infty$ if $\beta<\pi_c,$ and $\chi(\beta,q,\boldsymbol{\widehat{h}})=\infty$ if $\beta>\pi_c.$
	In other words,  
	the susceptibility function $\chi$ experiences a phase transition phenomenon.
	Indeed, let us define 
	\begin{align*}
	\pi_c=
	\sup\big\{\beta\geqslant 0: 
	\chi(\beta,q,\boldsymbol{\widehat{h}})<\infty 
	\big\},
	\end{align*}
	where $\chi$ is as in \eqref{percolation-probability}. Note that $\pi_c\leqslant \beta_c$, where $\beta_c$ is the critical probability given in \eqref{critical-prob}. 
	Since the event 
	$\{x\longleftrightarrow y\}$ 
	is increasing, by combining
	Lemma \ref{Stochastic domination of Bernoulli} and Corollary \ref{ineq-fund}, in the thermodynamic limit, we have
	\begin{align*}
	\phi^{\rm w}_{\mathbb{G};\beta,q,\boldsymbol{\widehat{0}}} (x\longleftrightarrow y)
	\leqslant 
	\phi^{\rm w}_{\mathbb{G};\beta,q,\boldsymbol{\widehat{h}}} (x\longleftrightarrow y)
	\leqslant 
	\mathbb{P}_{\mathbb{G};\beta} (x\longleftrightarrow y)  \quad \text{for all} \ x,y\in\mathbb{Z}^d.
	\end{align*}
	Combining this with the identity
	\begin{align}\label{id-sus}
	\chi(\beta,q,\boldsymbol{\widehat{h}})= \sup_{x\in\mathbb{Z}^d}\sum_{y\in\mathbb{Z}^d}
	\phi^{\rm w}_{\mathbb{G};\beta,q,\boldsymbol{\widehat{h}}} (x\longleftrightarrow y),
	\end{align}
	we obtain
	$
	\chi(\beta,q,\boldsymbol{\widehat{0}})
	\leqslant
	\chi(\beta,q,\boldsymbol{\widehat{h}})
	\leqslant
	\chi(\beta,1,\boldsymbol{\widehat{0}}).
	$
	Consequently, by definition of $\pi_c$, we have
	\begin{align}\label{ineq-cp-1}
	\pi_c (1,\boldsymbol{\widehat{0}})
	\leqslant 
	\pi_c (q,\boldsymbol{\widehat{h}}) 
	\leqslant 
	\pi_c (q,\boldsymbol{\widehat{0}}).
	\end{align}
	It is well-known that $\pi_c (1,\boldsymbol{\widehat{0}})=\beta_c (1,\boldsymbol{\widehat{0}})$ (see \cite{Aiz87,Mensh86,Mensh86-1}) and that $\pi_c (q,\boldsymbol{\widehat{0}})\leqslant \beta_c (q,\boldsymbol{\widehat{0}})$.
	Since $0< 1-\exp\big[{-q\beta_c (1,\boldsymbol{\widehat{0}})\, J}\big], 1-\exp\big[{-q\beta_c (q,\boldsymbol{\widehat{0}})\, J}\big] <1$  for $d\geqslant 2$ (see \cite{Grimmett99,Grimmett2}), we have that
	$\pi_c (q,\boldsymbol{\widehat{h}})$ is non-trivial for all $q\in\{2,3,\ldots\}.$
	On the other hand,
	by Theorem \ref{phase-transition-existence}-Item 2), $\chi(\beta,q,\boldsymbol{\widehat{h}})=\infty$ when $\beta>\beta_c$, and  $\chi(\beta,q,\boldsymbol{\widehat{h}})<\infty$ when $\beta<\beta_0$.
	Since $\chi$ is an increasing function in $\beta$, see Proposition \ref{Monotonicity with respect to the coupling}-Item (b), the remark  follows.
\end{remark1}

\subsection*{Proof of Item 3}
From  Lemma \ref{lemma-teo2} with $f=\mathds{1}_{\{x\longleftrightarrow y\}}$, in the thermodynamic limit, we have
%
\begin{align*}
\phi^{\rm w}_{\mathbb{G};\beta,q,\boldsymbol{\widehat{h}}} (x\longleftrightarrow y)
\leqslant
C(\beta,\boldsymbol{\widehat{h}}) \,
\phi^{\rm f}_{\mathbb{G};\beta,q,\boldsymbol{\widehat{0}}} (x\longleftrightarrow y)
\quad \text{for all} \ x,y\in\mathbb{Z}^d,
\end{align*}
whenever $\boldsymbol{\widehat{h}}\in\ell^{1}(\mathbb{Z}^d)$ has positive terms.
It is known that, for $\beta<\beta_c (q,\boldsymbol{\widehat{0}})$ and $q\in\{2,3,\ldots\}$,
there exists $\gamma(\beta,q)>0$ such that (see main theorem in \cite{DRV19}), 
\[
\phi^{\rm f}_{\mathbb{G};\beta,q,\boldsymbol{\widehat{0}}}(x\longleftrightarrow y)
\leqslant 
{\exp}\big[{-\|x-y\| \gamma(\beta,q)} \big].
\]	
And thereby, for $\beta<\beta_c (q,\boldsymbol{\widehat{0}})$,
\begin{align}\label{des-dec}
\phi^{\rm w}_{\mathbb{G};\beta,q,\boldsymbol{\widehat{h}}} (x\longleftrightarrow y)
\leqslant
C(\beta,\boldsymbol{\widehat{h}}) \,
{\exp}\big[{-\|x-y\| \gamma(\beta,q)} \big] \quad \text{for all} \ x,y\in\mathbb{Z}^d.
\end{align}

In what follows we claim that
$\beta_c (q,\boldsymbol{\widehat{0}})
=
\beta_c (q,\boldsymbol{\widehat{h}}) $. Indeed, 
by Item \eqref{ineq-cp},
$
\beta_c (q,\boldsymbol{\widehat{0}})
\geqslant 
\beta_c (q,\boldsymbol{\widehat{h}}).
$	
Now, suppose that $\beta_c (q,\boldsymbol{\widehat{0}})
> 
\beta_c (q,\boldsymbol{\widehat{h}})$. From Theorem \ref{phase-transition-existence}-Item 2),
\begin{align}\label{interval}
\chi(\beta,q,\boldsymbol{\widehat{h}})=\infty \quad \text{when} \ \beta_c (q,\boldsymbol{\widehat{h}})
<\beta<
\beta_c(q,\boldsymbol{\widehat{0}}).
\end{align} 	
From \eqref{des-dec}, for $\beta<\beta_c (q,\boldsymbol{\widehat{0}})$ and \text{for all} $y\in\mathbb{Z}^d$,
\begin{align*}
\sum_{y\in\mathbb{Z}^d}
\phi^{\rm w}_{\mathbb{G};\beta,q,\boldsymbol{\widehat{h}}} (x\longleftrightarrow y)
\leqslant
C(\beta,\boldsymbol{\widehat{h}}) \,
\sum_{n=1}^\infty
{\exp}\big[{-n \gamma(\beta,q)} \big]
\Bigg(\sum_{y:\|x-y\|=n}1\Bigg)<\infty.
\end{align*}
By combining this with the identity
\eqref{id-sus}, we have $\chi(\beta,q,\boldsymbol{\widehat{h}})<\infty$ when $\beta<\beta_c (q,\boldsymbol{\widehat{0}})$, but this contradicts the statement in \eqref{interval}. Hence the claimed follows.

Therefore, the exponential decay \eqref{des-dec} is satisfied for all $\beta<\beta_c (q,\boldsymbol{\widehat{h}})$.
This completes the proof of the third item.	

\qed 

\begin{remark1}
	Considering the same notation as the proof of Item 1),
	notice that the third item
	of Theorem \ref{phase-transition-existence} could be replaced by the following statement.
	In the case that $\boldsymbol{\widehat{h}}\in\ell^{1}(\mathbb{Z}^d)$ has positive terms,
	for all $\beta> 0$
	there exist $C(\beta,\boldsymbol{\widehat{h}}),\gamma(\beta,q)>0$, such that
	\begin{align*}
	\sup_{x\in\mathbb{Z}^d}
	\phi^{\rm w}_{\mathbb{G};\beta,q,\boldsymbol{\widehat{h}}} 
	\big[x\longleftrightarrow \partial B(x,n)\big]
	\leqslant 
	C(\beta,\boldsymbol{\widehat{h}})\, 
	{\exp}\big[-n \gamma(\beta,q)\big]
	\quad \text{for all} \ n,
	\end{align*}
	whenever $\beta<\beta_c$.
\end{remark1}

\begin{remark1}\label{decay-exp}
	Let $x,z\in V$ be distinct vertices. A subset $W$ of $V$ is said to separate $x$ and $z$ if $x,z\notin W$ and every path from $x$ to $z$ contains some vertex of $W$.
	Let $q=2$ and let $\boldsymbol{\widehat{h}}\in\ell^{1}(\mathbb{Z}^d)$ be a summable positive external field.
	Assume that $x, z \in V$ are  distinct vertices,
	and that $W$ separate $x$ and $z$.
	For $\beta\geqslant 0$  there exists $C(\beta,\boldsymbol{\widehat{h}})>0$ such that 
	\begin{align}\label{ident-SL}
	\phi^{\rm f}_{G;\beta,q,\boldsymbol{\widehat{h}}} (x\longleftrightarrow z)
	\leqslant
	C(\beta,\boldsymbol{\widehat{h}})
	\sum_{y\in W} \phi^{\rm f}_{G;\beta,q,\boldsymbol{\widehat{h}}} (x\longleftrightarrow y) \, \phi^{\rm f}_{G;\beta,q,\boldsymbol{\widehat{h}}} (y\longleftrightarrow z).
	\end{align}
	The above correlation inequality is known as the (modified) Simon-Lieb inequality.
	Indeed,	
	since the event $\{x\longleftrightarrow y\}$ is monotone  increasing,
	Corollary \ref{ineq-fund} gives
	\begin{align}\label{ineq-fund-1}
	\phi^{\rm f}_{G;\beta,q,\boldsymbol{\widehat{0}}} (x\longleftrightarrow y)
	\leqslant 
	\phi^{\rm f}_{G;\beta,q,\boldsymbol{\widehat{h}}}(x\longleftrightarrow y)  \quad \text{for all} \ x,y\in V.
	\end{align}
	
	The Simon-Lieb inequality \cite{Grimmett2,H57,Lieb,Sim}
	for the measure $	\phi_{G;\beta,q,\boldsymbol{\widehat{0}}}$, with $q=2$, gives
	\begin{align*}
	\phi^{\rm f}_{G;\beta,q,\boldsymbol{\widehat{0}}} (x\longleftrightarrow z)
	&\leqslant 
	\sum_{y\in W} \phi^{\rm f}_{G;\beta,q,\boldsymbol{\widehat{0}}} (x\longleftrightarrow y)\, \phi^{\rm f}_{G;\beta,q,\boldsymbol{\widehat{0}}} (y\longleftrightarrow z)  
	\\
	&\stackrel{\eqref{ineq-fund-1}}{\leqslant}
	\sum_{y\in W} \phi^{\rm f}_{G;\beta,q,\boldsymbol{\widehat{h}}} (x\longleftrightarrow y)\, \phi^{\rm f}_{G;\beta,q,\boldsymbol{\widehat{h}}} (y\longleftrightarrow z).
	\end{align*}
	By combining this  with Lemma \ref{lemma-teo2}, the validity of the inequality in \eqref{ident-SL} follows.
\end{remark1}

\begin{remark1}
	Since $C(\beta,\boldsymbol{\widehat{h}})\geqslant 1$ and $\phi^{\rm f}_{\mathbb{G},\beta,q,\boldsymbol{\widehat{h}}}$ is a non-translation-invariant measure with $q=2$ and $\boldsymbol{\widehat{h}}\in\ell^{1}(\mathbb{Z}^d)$ has positive terms, note that
	we cannot use the Simon-Lieb inequality in \eqref{ident-SL}, as in the  translationally-invariant case, to prove
	the exponential decay of the connectivity function for two
	points.
\end{remark1}

\subsection*{Acknowledgments}
We would like to thank L. Cioletti for many valuable comments and careful reading of this manuscript.
This study was financed in part by the Coordena\c{c}\~ao de Aperfei\c{c}oamento de Pessoal de N\'ivel Superior - Brazil (CAPES) - Finance Code 001. 


\end{document}